\documentclass[sigconf, nonacm]{acmart}







\usepackage{graphicx}
\usepackage{tabularx}
\usepackage{algorithm}
\usepackage{algorithmicx}
\usepackage[noend]{algpseudocode}

\usepackage{amstext}
\usepackage{amsthm}

\usepackage{blindtext}
\usepackage{color}
\usepackage{siunitx}
\sisetup{
  binary-units = true,
  group-separator = {,},
  group-four-digits = true
}
\DeclareSIUnit[number-unit-product = ]\percent{\char`\%}

\usepackage{listings}
\lstset{basicstyle=\ttfamily,language=C++}

\usepackage{multirow}


\newtheorem{thm}{Theorem}


%
%


\newcommand{\textred}[1]{\textcolor{red}{#1}}

\ifx\noeditingmarks\undefined
   \newcommand{\pgwrapper}[2]{\textbf{#1: }\textred{\textit{#2}}}

\else
   \newcommand{\pgwrapper}[2]{}

\fi

\newcommand{\cuttext}[1]{}

\definecolor{truegreen}{RGB}{9, 155, 9}

\definecolor{fbblue}{RGB}{67, 96, 156}
\newcommand{\fbok}[1]{#1} 


\newcommand{\fb}{Meta}

\newcommand{\fbprom}{Meta}

\newcommand{\thefb}{Meta}

\newcommand{\CCC}{Cloned Concurrency Control}
\newcommand{\ccc}{cloned concurrency control}

\newcommand{\sysName}{C5\xspace}
\newcommand{\sys}{C5}
\newcommand{\cicada}{Cicada}
\newcommand{\myrocks}{MyRocks}
\newcommand{\syscicada}{\sys{}-\cicada{}}
\newcommand{\sysmyrocks}{\sys{}-\myrocks{}}

\newcommand{\kuafu}{KuaFu\xspace}

\newcommand{\scheduler}{scheduler}

\newcommand{\worker}{worker}
\newcommand{\Worker}{Worker}
\newcommand{\snapshotter}{snapshotter}
\newcommand{\Snapshotter}{Snapshotter}


\usepackage{titlesec}

\setlength{\abovecaptionskip}{2pt plus 3pt minus 2pt} 
\setlength{\belowcaptionskip}{-12pt plus 3pt minus 2pt} 

\titlespacing*{\section}{0pt}{1ex plus 1ex minus .2ex}{1ex plus .2ex}

\titlespacing*{\subsection}{0pt}{1ex plus 1ex minus .2ex}{1ex plus .2ex}


\renewenvironment{enumerate}{
  \begin{list}{\labelenumi}{
     \usecounter{enumi}
     \setlength{\topsep}{0.5ex}
     \setlength{\itemsep}{0pt}
     \setlength{\itemindent}{0pt}
     \setlength{\leftmargin}{\labelwidth}
     \addtolength{\leftmargin}{-2pt}}
}{\end{list}}

\clubpenalty=10000 
\widowpenalty = 10000

\begin{document}

\title{\sys{}: Cloned Concurrency Control That Always Keeps Up}

\author{Jeffrey Helt}
\affiliation{%
  \institution{Princeton University}
}
\email{jhelt@cs.princeton.edu}

\author{Abhinav Sharma}
\affiliation{%
  \institution{Meta Platforms}
}
\email{abhinavsharma@fb.com}

\author{Daniel J. Abadi}
\affiliation{%
  \institution{University of Maryland, College Park}
}
\email{abadi@cs.umd.edu}

\author{Wyatt Lloyd}
\affiliation{%
  \institution{Princeton University}
}
\email{wlloyd@princeton.edu}

\author{Jose M. Faleiro}
\affiliation{%
  \institution{Microsoft Research}
}
\email{jmf@microsoft.com}

\begin{abstract}
  Asynchronously replicated primary-backup databases are commonly deployed to
  improve availability and offload read-only transactions. To both apply replicated
  writes from the primary and serve read-only transactions, the backups
  implement a \ccc{} protocol. The protocol ensures read-only transactions always
  return a snapshot of state that previously existed on the primary. This
  compels the backup to exactly copy the commit order resulting from the
  primary's concurrency control. Existing \ccc{} protocols guarantee this by
  limiting the backup's parallelism. As a result, the primary's concurrency
  control executes some workloads with more parallelism than these protocols. In
  this paper, we prove that this parallelism gap leads to unbounded replication
  lag, where writes can take arbitrarily long to replicate to the backup and
  which has led to catastrophic failures in production systems. We then design
  \sys{}, the first cloned concurrency protocol to provide bounded
  replication lag. We implement two versions of \sys{}: Our evaluation in \myrocks{}, a widely deployed database, demonstrates \sys{} provides bounded replication lag. Our evaluation in \cicada{}, a recent in-memory database, demonstrates \sys{} keeps up with even the fastest of primaries.
\end{abstract}




\maketitle



\section{Introduction}
\label{sec:intro}

Asynchronously replicated primary-backup databases are the cornerstones of many
applications~\cite{existential2015sosp,instagram2015scaling,espresso2013sigmod,verbitski2018aurora,antonopoulos2019socrates}.
In these systems, after the primary executes a transaction, it sends the
resultant writes to a set of backups. The backups apply the writes to
reconstruct the primary's state and execute read-only transactions against their
local state. To simultaneously execute writes and read-only transactions,
a backup implements a \textit{\ccc{}} protocol. In addition to providing
availability if the primary fails, these protocols improve the database's
performance: throughput is increased by serving reads from many backups and
latency is reduced by serving reads from a nearby backup.




To reap these benefits without breaking overlying applications, a \ccc{}
protocol must guarantee \textit{monotonic prefix consistency}, where it exposes a
progressing sequence of the primary's recent states to read-only transactions.
This ensures backups never return values from states that did not exist on
the primary, thereby helping maintain application invariants.

But monotonic prefix consistency makes no guarantees about how quickly writes
replicate to the backup. In theory, they could be delayed indefinitely. To be
reliable, a \ccc{} protocol must also guarantee \textit{bounded replication lag}.
Intuitively, a transaction's replication lag is the time between when its
writes are first observable by reads on the primary and backup. By guaranteeing
bounded replication lag, a \ccc{} protocol ensures transactions always appear
shortly.

Guaranteeing bounded replication lag is important; significant lag has led to
catastrophic failures. For instance, GitLab was unavailable for eighteen hours after a
workload change caused such significant lag that replication stopped entirely.
In the process of fixing the issue, user data was lost~\cite{gitlab2017incident,
  gitlab2017postmortem}. \fbok{Similarly, several times in past years, \fbprom{}
  routed all user requests away from a data center because too many of that location's
  backups had excessive lag.}

To guarantee bounded replication lag, a \ccc{} protocol must apply
writes with as much parallelism as was used by the primary's concurrency control
protocol. But guaranteeing monotonic prefix consistency severely constrains the
\ccc{} protocol, making it difficult to execute with sufficient parallelism. For
example, consider two concurrent transactions with both conflicting and
non-conflicting writes. If the primary employs two-phase locking~\cite{bernstein-book}, the non-conflicting writes can execute
in parallel, and the commit order is determined by the lock
acquisition order on the first conflicting write. Once their commit order is
chosen, however, monotonic prefix consistency mandates that a backup's state
reflects it. Thus, the \ccc{} protocol must ensure the transactions are
serialized correctly, potentially constraining its parallelism. 

In the past, slow I/O devices bottlenecked the primary and backup, dominating differences in parallelism. But low-latency persistent storage and large main memories
removed this bottleneck, so the primary's concurrency control and the backup's
\ccc{} protocols are now directly competing.



Existing protocols differ in how much parallelism they leverage while
executing writes. On one end of the spectrum are single-threaded
protocols~\cite{postgres,mysql}. On the other are
transaction-~\cite{hong2013kuafu,king1991remote,oracle2001txnscheduler,mysql8writeset}
and
page-granularity~\cite{verbitski2018aurora,oracle2019adg,antonopoulos2019socrates}
protocols. The former execute non-conflicting transactions in parallel; the
latter execute writes to different pages in parallel. For both, however, there
are workloads where the primary's concurrency control always executes
with more parallelism. Using these workloads, we prove neither class of
protocols guarantees bounded replication lag.
In turn, implementations of these classes of protocols
are not reliable because changes to the workload or the primary's
concurrency control can suddenly lead to unbounded replication lag.

\begin{table}[t]
  \small
  \centering
\begin{tabular}{@{}l l l@{}}
  \midrule
  \textbf{Execution}    & \textbf{Parallelism}     & \textbf{Implementation That}\\
  \textbf{Constraints}  & \textbf{$\ge$ Primary?}  & \textbf{Always Keeps Up}\\
  \hline\\[-1.8ex]
  Existing Protocols &&\\
  \hspace{1ex} Transaction-Granularity        & $\times$ (\S\ref{sec:scheduling:proof})            & $\implies$Impossible\\
  \hspace{1ex} Page-Granularity       & $\times$ (\S\ref{sec:scheduling:page-granularity}) & $\implies$Impossible\\
  \sysName{} Protocols &&
  \multirow{3}{*}{\shortstack[l]{\sysmyrocks{} (\S\ref{sec:myrocks-impl})\\ \syscicada{} (\S\ref{sec:cicada})}}\\
    \hspace{1ex} Primary Granularity & $\checkmark$ (\S\ref{sec:design:scheduler-proof}) & \\[1.5ex]
  \midrule \vspace{0.5mm}
\end{tabular}
  \caption{Contribution summary. We prove existing \ccc{} protocols cannot keep up because their execution granularity exposes less parallelism than their primary. We introduce primary-granularity execution with \sys{}, show it exposes as much parallelism as the primary, and evaluate two implementations to show they always keep up.}
  \label{tbl:contributions}
\end{table}

In this paper, we present \sys{}, the first \ccc{} protocol to provide
bounded replication lag. To always keep up, \sys{}'s
insight is that the backup's protocol must execute writes 
at the same granularity as the primary's concurrency control
protocol.
Thus, its \underline{c}loned \underline{c}oncurrency \underline{c}ontrol has \underline{c}ommensurate \underline{c}onstraints (\sys{}) with the primary.
Because the primary executes writes to non-conflicting rows in parallel, \sys{} uses a row-granularity protocol.

But row-granularity execution introduces several challenges. First, applying
individual row writes to the backup's state can lead to permanent violations of
monotonic prefix consistency, where the backup's state ceases to match the
primary's. To avoid such violations, \sys{}'s \scheduler{} calculates the necessary metadata for
its \worker{}s to correctly order writes to each row. Second, row-granularity execution does not
guarantee monotonic prefix consistency for read-only transactions because
transactional atomicity and commit order are not necessarily respected. Imposing
additional constraints on \worker{}s, however, could reintroduce
replication lag. Instead, \sys{}'s \snapshotter{} uses three progressing
snapshots, ensuring reads observe a consistent state without constraining
execution.

We show formally that a
row-granularity protocol never imposes more constraints on the backup's
execution than a valid concurrency control protocol imposes on the primary's.
Thus, \sys{} can, in theory, always match the primary's parallelism.

In practice, however, row-granularity execution is necessary but not sufficient to provide bounded replication lag. Other bottlenecks, such as a slow \scheduler{}, may get in the way. We thus
implement two versions of \sys{}, \sysmyrocks{} and \syscicada{}, to confirm it always keeps up.
\sysmyrocks{} is backward-compatible and deployed in production at \fb{}.
Making it backward-compatible, however, required some additional constraints to the parallelism in our design.
We thus also implemented \syscicada, which faithfully implements our design (without additional constraints) and demonstrates \sys{} can keep up with a cutting-edge concurrency control protocol.

We compare each of our \sys{} implementations to a state-of-the-art, transaction-granularity protocol~\cite{hong2013kuafu}.
We find the transaction-granularity protocol can keep up on some workloads.
But unbounded replication lag is lurking nearby: Simple optimizations that improve primary throughput cause the backup to lag.
In contrast, our \sys{} implementations always keep up.

Figure~\ref{tbl:contributions} summarizes the primary contributions of
this paper. In addition, Section~\ref{sec:deployment} describes
experience from deploying \sys{} at Meta. Our experience echoes our
evaluation. The simple single-threaded cloned currency control that
was previously deployed could often keep up with the primary. But
large replication lag would be exposed by workload
changes. The deployment of \sys{} eradicated these issues and led to
noticeably better reliability.




\section{Background}
\label{sec:bg}

This section gives background on primary-backup databases, \ccc{} protocols,
and their guarantees.

\subsection{Motivating Example}
\label{sec:bg:example}

Throughout the paper, we use the following motivating example.
Alice, Bob, and Charlie use a social media platform to
share and comment on videos. The platform stores its videos and comments in a
database. One table stores each video's name and metadata, including a per-video
comment counter; a second table stores each comment's text and metadata. When a
user comments on a video, an application server executes a transaction of two
operations: it first inserts a new row in the comment table and then
increments the video's counter.

The platform replicates the primary's database at a set of
backups. The primary implements a concurrency control protocol, and each
backup implements the \ccc{} protocol.

\subsection{Primary-Backup Replication}
\label{sec:bg:replication}

\begin{figure}[tb]
  \centering
  \includegraphics[page=5,width=0.9\linewidth]{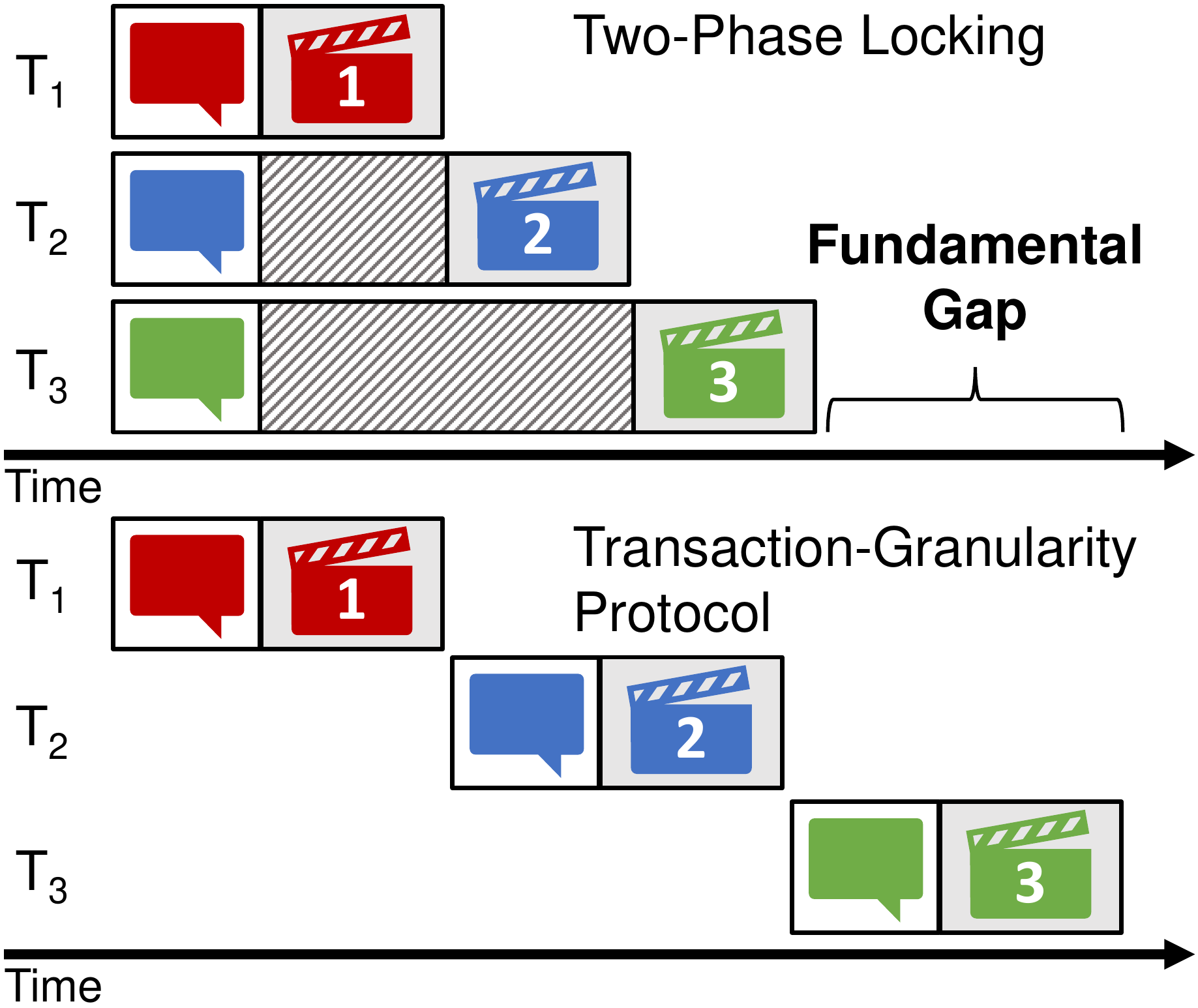}
  \caption{Transaction processing in primary-backup. The primary
    executes read-write transactions. The backup replicates the changes and
    executes read-only transactions.}
  \label{fig:primary-backup}
\end{figure}

Figure~\ref{fig:primary-backup} shows an overview of the primary and backup's
processing as they execute transactions. For each operation in a read-write
transaction, the primary parses it and plans its execution. Each plan, which may
include row queries, local computation, and row writes (i.e., inserts, updates,
and deletes), is then executed. For instance, to increment a video's comment
counter, the primary reads the counter's current value from the video's row in
the video table, increments it, and writes the result back to the row. After all
operations execute, the transaction commits by writing to the primary's database
and flushing a log of its changes to stable storage.

The primary then sends a copy of its log to the backup. The log reflects a total
order of the writes applied by the primary, determined by the primary's
transaction commit order and the order of each transaction's operations. The log
includes, for each transaction, the written rows and metadata to demarcate its
writes from those of
others~\cite{king1991remote,mysql,mysqlgroupcommit,mysql8writeset,myRocks,oracle2019ggate,mariadb10,hong2013kuafu,oracle2001txnscheduler}.

The backup's \ccc{} protocol reads the operations in the log and schedules them
for execution by \worker{} threads, bypassing parsing and planning.
The \worker{}s apply operations to the backup's copy of the database.
The protocol also executes read-only transactions using separate threads.

\subsection{Monotonic Prefix Consistency}
\label{sec:bg:consistency}

While an asynchronously replicated backup's state inevitably lags, it is ideally otherwise indistinguishable from the primary.
Intuitively, the backup should expose a progressing sequence of the primary's
recent states. This intuitive behavior is provided by many existing
systems~\cite{myRocks,mysql,postgres,verbitski2018aurora,oracle2019adg,antonopoulos2019socrates,oracle2001txnscheduler,mysql8writeset,mysqlgroupcommit,mariadb10,oracle2019ggate,hong2013kuafu,king1991remote,wang2017query}.
We refer to this guarantee here as \textit{monotonic prefix
  consistency} (MPC).

We define monotonic prefix consistency relative to the primary's log of transactions (and thus writes). MPC comprises two
guarantees: First, the backup's state must reflect the changes of a contiguous
prefix of transactions. Second, the sequence of states exposed by
the backup to read-only transactions must reflect prefixes of monotonically
increasing length.

In our example application, MPC ensures read-only transactions never see a
mismatch between the number of comments on a video and the video's comment
counter; each transaction's changes appear atomically. Further, MPC ensures
comments never seem to disappear. Once a comment becomes visible to a user, all future states exposed by the same backup will include it. Although beyond our scope here, MPC can be guaranteed across multiple backups using sticky sessions~\cite{terry1994sessions} or with client-tracked metadata.

MPC also maintains implicit application invariants. For instance, suppose Alice
first updates her default video permissions to share her future videos only with
Bob and then uploads a new video. To make these changes, a transaction first
updates her default access control list to only include Bob, and a
subsequent transaction adds the new video. An implicit invariant, that Charlie
should not see the new video, is expressed by the order of the two
transactions. MPC does not break such invariants because states always reflect
contiguous prefixes of the log.




\subsection{Bounded Replication Lag}
\label{sec:bg:replication-lag}

Monotonic prefix consistency specifies a \ccc{} protocol's correctness but
does not clarify its performance requirements. For instance, if Alice calls Bob
after commenting on his video, Bob should ideally see her new comment by the time he
receives her call. Given only MPC, the comment may be delayed for an arbitrarily
long time.

We define \textit{replication lag} as the time between when a transaction's
changes are included in the state returned by the primary and backup. (For the purposes of this paper, we assume the log is always delivered promptly to the backup.) More
precisely, we say a transaction $T$ is included in the state returned by the
primary or backup once either its writes or later writes are returned to reads.
To include a transaction in the returned state
requires the backup's protocol to do one of the following: (1) it can eagerly apply the transaction's changes
to its copy of the database, making them visible to future reads without
additional processing beyond that required to execute the read at the
primary~\cite{postgres,mysql,mysql8writeset,hong2013kuafu,king1991remote,oracle2001txnscheduler,verbitski2018aurora,oracle2019adg,antonopoulos2019socrates,mariadb10,oracle2019ggate};
or (2) it can defer part of the execution of the transaction's changes until a corresponding read arrives~\cite{wang2017query}.
For each $T$, we then define $f_p(T)$ and $f_b(T)$ as the real time when the
primary and backup respectively include $T$ in their state. For eager
protocols, $f_b(T)$ is the first time at which an arriving read would see $T$.
For lazy protocols, $f_b(T)$ is the first time at which an arriving read would
see $T$, plus the additional time required to finish any deferred execution.

A \ccc{} protocol guarantees \textit{bounded replication lag} if there exists
some finite time $L$ such that for all workloads $W$ and for all transactions
$T$ in $W$, $f_b(T) - f_p(T) \leq L$. (Transactions and workloads are defined
more precisely in Section~\ref{sec:scheduling:proof}.) In practice, guaranteeing
bounded lag ensures Bob never waits very long to see Alice's comment.


\section{Unbounded Lag In Existing Protocols}
\label{sec:scheduling}

Guaranteeing bounded replication lag is challenging. To satisfy MPC, the
backup's \ccc{} protocol must ensure the backup's state converges to the
primary's. To accomplish this, existing protocols serialize conflicting
writes~\cite{postgres,mysql,hong2013kuafu,king1991remote,oracle2001txnscheduler,mysql8writeset,verbitski2018aurora,oracle2019adg,antonopoulos2019socrates,mariadb10,oracle2019ggate,wang2017query}.

Serialization limits the backup's parallelism. But to guarantee
bounded replication lag, the backup's protocol must execute every
workload with as much parallelism as the primary's concurrency
control protocol. Otherwise lag can grow arbitrarily large.

Transaction- and page-granularity \ccc{} protocols are the current best
approaches. The former assume logical logs, and
the latter assume physical redo
logs~\cite{verbitski2018aurora,oracle2019adg,antonopoulos2019socrates}.
In transaction-granularity protocols, writes
conflict if they modify the same row, and the protocol serializes
transactions with conflicting
writes~\cite{hong2013kuafu,king1991remote,oracle2001txnscheduler,mysql8writeset}.
Page-granularity protocols serialize writes to each
page~\cite{verbitski2018aurora,oracle2019adg,antonopoulos2019socrates}. Both,
however, fail to guarantee bounded replication lag because for some workloads,
the primary executes with more parallelism than the backup.

\begin{figure}[tb]
  \centering
  \includegraphics[page=1,width=0.8\linewidth]{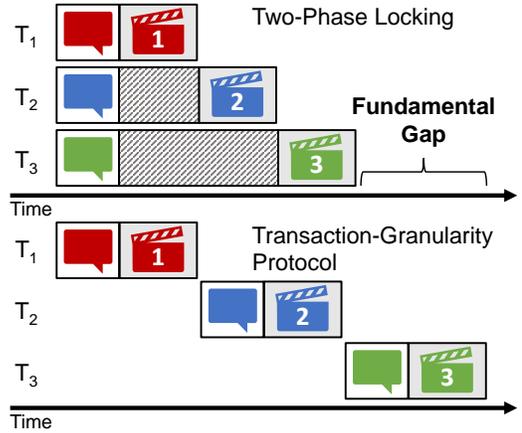}
  \caption{Primary (2PL) and backup (transaction-granularity protocol)
    executions when three users comment on the same video. Diagonal lines
    depict waiting for a lock.}
  \label{fig:interleavings}
\end{figure}

We can see how a transaction-granularity protocol can lag by returning to our motivating
example. Suppose Alice, Bob, and Charlie simultaneously comment on the same
video. Figure~\ref{fig:interleavings} shows a primary and backup's executions
of the six resultant operations. The primary uses two-phase
locking~\cite{bernstein-book} and stored procedures (i.e., no parsing and
planning); the backup implements a transaction-granularity
protocol~\cite{hong2013kuafu,king1991remote,oracle2001txnscheduler,mysql8writeset}.

On the primary, three threads insert rows in the comments table in parallel, but
updates to the video's comment counter are serialized by a row lock. On the
backup, however, the transaction-granularity protocol serially executes all of the operations. Even if the backup uses different \worker{}s to execute each
transaction, their execution is as at least as slow as that of one \worker{}. Thus, a fundamental
gap exists between the parallelism available to the primary and backup, which can cause arbitrarily long lag.

Page-granularity protocols have similar issues. Concurrency control protocols
using logical locking may allow concurrent transactions to update distinct
rows residing on the same physical page~\cite{oracle19,bernstein-book}.  Such concurrency control
can cause arbitrarily long replication lag with page-granularity
\ccc{} because writes that execute in parallel on the primary
are serialized on the backup. Thus, a fundamental gap again exists.

In the remainder of this section, we prove this problem is general to all
transaction-granularity protocols, and thus, no such
protocol guarantees bounded replication lag. We subsequently summarize the comparable
theorem for page-granularity protocols.

\subsection{Proof Of Unbounded Replication Lag}
\label{sec:scheduling:proof}

\noindentparagraph{System Model.} A \textit{database} $\mathcal{D}$ stores sets
$\mathcal{K}$ of \textit{keys} and $\mathcal{V}$ of \textit{values}. The
database's \textit{state} is a mapping from $\mathcal{K}$ to $\mathcal{V}$. A
\textit{transaction} $T$ is an ordered set of \textit{operations} (reads and
writes) on individual keys. For simplicity, we assume each value is
uniquely identifiable, so two identical transactions are, too.  We define
$R(T)$ and $W(T)$ as the sets of keys read and written by all operations in
$T$. We define transaction arrival times at the primary and backup as $a_p(T)$
and $a_b(T)$ respectively. Finally, we define the real time at which a
transaction is included in the primary and backup's state as $f_p(T)$ and
$f_b(T)$ respectively (as in Section~\ref{sec:bg:replication-lag}).

We assume a primary-backup system where both have $m$ \textit{cores}. In
isolation, the primary's cores each execute an operation in $e > 0$ time units.
The primary uses 2PL~\cite{bernstein-book}, so an operation may wait for a lock
if there is a concurrent operation on the same key. If there are multiple
conflicting operations, assume they are granted the lock in the order
requested.  To account for both eager and lazy \ccc{} protocols, assume the
backup's cores execute each operation in $0<d \leq e$ time units. We
assume $d \leq e$ because backups can avoid some processing done on the
primary, such as parsing and planning.  

When the primary finishes executing all operations in transaction $T$, it
records $T$'s writes in its log. $T_1 \prec T_2$ denotes that $T_1$ precedes
$T_2$ in the log. The log is then sent to the backup. For
simplicity, we assume this occurs instantaneously.

A \textit{workload} $W \in \mathcal{W}$ is a tuple $(\mathcal{T},
A_{\mathcal{T}})$ where $\mathcal{T}$ is a set of transactions and
$A_{\mathcal{T}}$ is a function from $\mathbb{R}$ to finite sets of transactions
$T \in \mathcal{T}$ representing the transaction arrival process at the primary.
$\mathcal{W}$ is the set of all definable workloads.

Transaction $T$'s \textit{replication lag} is given by $f_b(T) - f_p(T)$. A
\ccc{} protocol has finite replication lag for workload $W = (\mathcal{T},
A_{\mathcal{T}})$ if there
exists some finite $L$ such that for all $T \in \mathcal{T}$, $f_b(T) - f_p(T)
\leq L$, and it guarantees bounded replication lag if it has finite
replication lag for all $W \in \mathcal{W}$.

\noindentparagraph{Definitions \& Assumptions.} A \textit{transaction-granularity \ccc{}
  protocol} guarantees that for all pairs of transactions $T_1$ and
$T_2$, if $W(T_1) \cap W(T_2) \neq \emptyset$ and $T_1 \prec T_2$,
then all of $T_1$'s writes execute before any of $T_2$'s.

The proof below requires $m > \left\lceil \frac{e}{d} \right\rceil$, but this
assumption is reasonable in practice. Server CPUs commonly contain at least 64
physical cores~\cite{intel-servers}. Thus, the assumption is not satisfied only
if the backup executes operations more than $63$ times faster than the primary.
Stored procedures and sophisticated concurrency control
protocols~\cite{tu2013speedy, narula2014phase, wang2016mostly,
  larson2011concurrency} make such an advantage unlikely.

\begin{thm}
  Assume $m > \left\lceil \frac{e}{d} \right\rceil$. A primary-backup system
  that uses two-phase locking on its primary and a transaction-granularity
  \ccc{} protocol on its backup cannot guarantee bounded replication lag.
\end{thm}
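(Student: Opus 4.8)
The plan is to directly negate the definition of bounded replication lag: for every candidate bound $L$, I exhibit a single workload containing a transaction whose lag $f_b(T)-f_p(T)$ exceeds $L$. The engine of the argument is a throughput comparison. I will design a workload on which the primary sustains a transaction-completion rate strictly greater than the best rate the backup can achieve under the transaction-granularity constraint. Because the backup falls behind at a constant rate, the backlog of not-yet-applied transactions grows, and so does the lag of the most recent ones; feeding in enough transactions then pushes the lag past any fixed $L$.

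Concretely, I would fix $k = \lceil e/d \rceil + 1$ and let every transaction write a single shared ``hot'' key together with $k-1$ writes to pairwise-distinct ``cold'' keys, ordering the hot write last. I choose the arrival function (part of the workload's definition) to release these transactions spaced exactly $e$ apart. Under 2PL the cold writes never contend, so each transaction occupies one core while the hot-key lock serializes the final write of consecutive transactions. Pacing arrivals at exactly the hot-lock holding time $e$ makes the pipeline tight, so $T_i$ completes at $f_p(T_i) = (i+k-1)e$ --- one completion every $e$ time units. Reading the schedule shows that at any instant at most $k$ transactions hold a core simultaneously, and since the hypothesis gives $m > \lceil e/d \rceil = k-1$, i.e.\ $m \ge k$, the primary genuinely has enough cores to realize this schedule.

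For the backup, every pair of transactions shares the hot key in its write set, so the transaction-granularity definition forces all of $T_i$'s writes to precede any of $T_{i+1}$'s for every $i$; transitively the backup must execute $T_1,\dots,T_i$ fully serially, regardless of how many \worker{} cores it has. Finishing $T_i$ therefore requires at least $ik$ operations at $d$ time each, giving $f_b(T_i) \ge ikd$ (log delivery only pushes this later, so the bound is valid, and the definition's extra deferred-execution term for lazy protocols only adds to it). Combining the two estimates, $f_b(T_i) - f_p(T_i) \ge ikd - (i+k-1)e = i(kd-e) - (k-1)e$. Since $kd = (\lceil e/d\rceil+1)d \ge e + d > e$, the coefficient $kd-e$ is strictly positive, so the lag grows linearly in $i$; taking a workload with enough transactions drives the last one's lag above any given $L$.

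The hard part is threading a single parameter $k$ through two opposing requirements simultaneously. It must be large enough ($k > e/d$) that the backup's fully serialized rate $1/(kd)$ drops below the primary's $1/e$, yet small enough ($k \le m$) that the primary still has the cores to sustain rate $1/e$ without a hidden core shortage or a growing hot-lock queue. These two constraints are compatible \emph{exactly} under the hypothesis $m > \lceil e/d\rceil$, and the choice $k=\lceil e/d\rceil+1$ is what makes them meet. The delicate bookkeeping is thus twofold: rigorously justifying that the paced primary schedule never needs more than $k$ cores (so its rate really is $1/e$), and rigorously justifying the per-transaction serial lower bound on $f_b$ (so the backup really cannot beat $1/(kd)$). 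Everything else is routine arithmetic on the two completion-time expressions.
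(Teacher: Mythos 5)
Your proposal is correct and follows essentially the same route as the paper's proof: every transaction writes one shared hot key last plus $\lceil e/d\rceil$ non-conflicting keys, arrivals are paced every $e$ so the primary pipelines at rate $1/e$, while the transaction-granularity constraint forces the backup down to rate $1/(kd)<1/e$, so the lag grows linearly and exceeds any fixed $L$. The only differences are cosmetic --- you use the lower bound $f_b(T_i)\ge ikd$ where the paper computes $f_b(T_i)$ exactly (including the log-arrival offset), and your core-count accounting ($m\ge k$ suffices) is if anything slightly more careful than the paper's.
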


\begin{proof}
  Assume we have a primary-backup system as described above, and assume to
  contradict that it guarantees bounded replication lag. Then there exists some
  $L$ such that for all $W \in \mathcal{W}$, the system executes all
  transactions in $W$ with replication lag $\leq L$.
  
  We now construct a workload $W \in \mathcal{W}$ that includes at least one
  transaction with replication lag greater than $L$. The workload comprises a
  set of $\left\lceil\frac{L}{nd - e}\right\rceil$ transactions of
  $n>\left\lceil\frac{e}{d}\right\rceil$ writes. Because $nd > e$, the number of
  transactions is well-defined. The first $n-1$ writes of each transaction modify unique keys, and
  the last updates key $k_0$. Define $A$ such that a new transaction arrives at the
  primary every $e$ time units, starting at $0$.

  Because the primary uses 2PL, it executes the first $n-1$ writes of each
  transaction in parallel but serializes their final updates to $k_0$. For convenience,
  we index the transactions in the order they appear in the primary's log.

  For the first set of $m$ transactions, $f_p(T_0) = ne$, $\ldots$, and
  $f_p(T_{m-1}) = (n+m-1)e$. Because $m>n$, the core that executed $T_0$ is
  free when $T_m$ arrives. Thus, $T_m$ finishes $e$ time units after $T_{m-1}$.
  In general, we see $f_p(T_i) = (n+i)e$.

  The backup uses a transaction-granularity protocol, so it serially executes
  all writes in the workload. Thus, the backup finishes executing $T_0$ at
  $n(e+d)$. By construction, $nd \geq e$, so $f_p(T_1) \leq f_b(T_0)$.
  Thus, the backup immediately starts executing $T_1$ after $T_0$. The same is
  true for all subsequent transactions. In general, we see $f_b(T_i) = ne + (i+1)nd$.

  Thus, in general, $f_b(T_i) - f_p(T_i) = ne + (i+1)nd - (n+i)e = i(nd-e) + nd$. 
  For the final transaction $T$ in the workload, $i = \left\lceil\frac{L}{nd - e}\right\rceil$, and thus
  $f_b(T) - f_p(T) = \left\lceil\frac{L}{nd - e}\right\rceil(nd-e) + nd \ge \frac{L}{nd - e}(nd-e) + nd > \frac{L}{nd - e}(nd-e)$.
  Equivalently, $f_b(T) - f_p(T) > L$, a contradiction.
\end{proof}

The result above shows that if the primary has sufficient cores, then a
transaction-granularity protocol cannot guarantee bounded replication lag. To simplify our formalism, the proof assumes the primary uses 2PL and serializable isolation~\cite{berenson1995critique,papadimitriou1979serializability}. We note three important extensions: First, because it can only accelerate the primary but not the backup, the theorem applies even if the primary uses weaker isolation~\cite{adya1999consistency}. 

Second, a similar result can be derived for some optimistic protocols~\cite{bernstein-book,lim2017cicada}. For example, a similar execution to the one in Figure~\ref{fig:interleavings} is possible with multi-version timestamp ordering (MVTSO)~\cite{bernstein-book}. Using MVTSO, the three transactions still insert comments in parallel. If they then read the comment counter, write its new value, and perform validation serially in timestamp order, all three transactions will commit, and a fundamental gap will again exist. We leave the generalization of our formal framework to optimistic concurrency control to future work.

Third, because the above proof only assumes $0 < d \leq e$, it also applies even if there is primary-specific processing, such as parsing and planning. This additional processing can be accounted for by increasing $e$, and the theorem holds as long as $m > \left\lceil \frac{e}{d} \right\rceil$ remains true. If it does not, then the primary-backup system may be able to guarantee bounded replication lag, but only because the bottlenecks on the primary make it easy for an inefficient \ccc{} protocol to keep up.

More generally, the proofs above and below will not apply if bottlenecks prevent the primary from executing quickly enough or with enough parallelism, such as those in logging, persistence, or log transfer. Despite these cases, however, solving replication lag remains urgent: First, there are many cases where these are not bottlenecks. For parsing and planning, many deployments use stored procedures. For persistence, mechanisms such early lock release
~\cite{johnson2010aether,gawlick1985varieties,dewitt1984implementation} and epoch-based group commit~\cite{tu2013speedy,chandramouli2018faster} help decouple transaction throughput from I/O latency. Second, we expect advances in research and technology, such as non-volatile memory~\cite{intelOptane}, to eventually remove these bottlenecks.

\subsubsection{Page-granularity Protocols Cannot Keep Up.}
\label{sec:scheduling:page-granularity}
We use a structurally similar proof for page-granularity protocols, but we omit it here due to space constraints. Similar to the one above, the proof shows that if the primary has sufficient cores, can fit enough rows on each page, uses 2PL~\cite{bernstein-book}, and guarantees serializable isolation~\cite{papadimitriou1979serializability}, then a page-granularity protocol cannot guarantee bounded replication lag. The assumption about the number of cores is identical to the one assumed above.

Unlike the previous proof, we additionally assume that the number of rows that can fit on a page is greater than $\left\lceil \frac{e}{d} \right\rceil$. But like the assumption about cores, this assumption is reasonable in practice. With a typical cache line size of
\SI{64}{\byte}---more than enough to store a row with two integer columns---and page size of \SI{4}{\kibi\byte}, 64 rows can be stored on the same
page, each on a different cache line. Thus again the assumption will not hold
only if the backup can execute operations more than $63$ times
faster than the primary.

\section{Design}
\label{sec:design}

\sys{} achieves two competing goals: it ensures bounded replication lag and
guarantees monotonic prefix consistency for read-only transactions. To
accomplish this, \sys{} comprises three components: a \scheduler{}, a set of \worker{}s, and a \snapshotter{}. The \scheduler{} and \worker{}s ensure bounded replication lag by 
executing writes at a sufficiently fine granularity, and
the \snapshotter{} guarantees read-only transactions only see changes that are
valid under monotonic prefix consistency. Together they implement \sys{}'s
row-granularity \ccc{} protocol.

As shown in Section~\ref{sec:scheduling}, transaction- and page-granularity
protocols fail to provide bounded replication lag because they cannot always
execute with the same parallelism as the primary. The primary executes writes to
different rows in parallel. Thus, to provide bounded lag, \sys{}'s
\worker{}s execute writes at row granularity.\footnote{Some concurrency control protocols allow   two threads to update the same row's cells in parallel~\cite{huang2020stov2}. For ease of exposition, we assume they cannot, but rows are not fundamental to
  our design---\sys{} could be adapted for finer granularities.}

Unconstrained row-granularity execution, however, can lead to permanent violations of monotonic prefix
consistency because conflicting writes may execute in the wrong order. For instance, suppose two transactions $T$ and $U$ each update
rows $x$ and $y$. If different \worker{}s execute the resultant writes (denoted $w_T[x]$, $w_T[y]$,
$w_U[x]$, and $w_U[y]$), $w_T[x]$ may finish before $w_U[x]$ and $w_U[y]$
before $w_T[y]$. If there are no further writes to these rows, the backup will
forever reflect $w_U[x]$ and $w_T[y]$, violating transactional atomicity and thus MPC.
\sys{}'s \scheduler{} helps avoid permanent consistency violations by constraining the \worker{}s execution. These constraints ensure
that writes to each row are applied in the same order as on the primary. Thus,
each row reflects monotonically increasing prefixes of the log.

But per-row monotonicity is insufficient to guarantee global monotonic prefix
consistency. In the example above, a write to a third row $z$ from a third
transaction $V$ may be scheduled after $T$ and $U$ but applied first. If this occurs, then a
read-only transaction of rows $x$, $y$, and $z$ would violate monotonic prefix
consistency. 
Instead, \sys{}'s \snapshotter{} uses a set of three progressing database
snapshots to allow uninterrupted execution of non-conflicting writes while
guaranteeing MPC.

\begin{figure}[t]
  \centering
  \includegraphics[page=8,width=0.85\linewidth]{figs/figs.pdf}
  \caption{\sys{}'s \scheduler{}, \worker{}s, and \snapshotter{}.}
  \label{fig:system-overview}
\end{figure}

\begin{figure*}[t]
  \centering
  \includegraphics[page=11,width=0.75\linewidth]{figs/figs.pdf}
  \caption{Left to right shows the \scheduler{}'s queues as two \worker{}s execute four writes. Grey writes are being executed.}
  \label{fig:scheduler}
\end{figure*}

Figure~\ref{fig:system-overview} shows \sys{}'s design. The \scheduler{}
orders writes and schedules them for execution by the \worker{}s. 
The \snapshotter{} exposes a monotonic-prefix
consistent view of the database to read-only transactions, which are executed
by a separate set of threads. We now describe \sys{}'s components
in turn.

\subsection{Row-Granularity Scheduling \& Execution}
\label{sec:design:scheduler}

As described in Section~\ref{sec:bg:replication}, the backup continuously
receives a log of operations from the primary, including the rows written by each
operation and metadata to delimit transactions. To guarantee bounded
replication lag, \sys{}'s \worker{}s must execute individual row writes while obeying the constraints specified by the \scheduler{}. To avoid permanent consistency violations, the
\scheduler{} logically constructs a FIFO queue for each row whose order reflects the order of the row's writes in the log.

As the \scheduler{} processes writes, it assigns each a sequence number, which
reflects the write's position in the log. The \scheduler{}
then enqueues the write in the appropriate FIFO queue.



A write is \textit{safe} to execute when it reaches the head of its FIFO queue and the prior
head has finished executing. This assumes the backup receives the log of each row's writes and the \scheduler{} processes
them in order.
(The log shipping subsystems in many commercial databases
satisfy this assumption~\cite{mysql,postgres,oracle19}.) Given this, the
\scheduler{} is assured that when it processes a write, all conflicting writes
that precede it in the log either are already in the queue or are executing.

To keep replication lag small, the \scheduler{} ensures \worker{}s execute safe writes promptly. To do so, the \scheduler{} uses a FIFO queue to order the
queues described above. To avoid ambiguity, we refer to a \scheduler{} queue and per-row queues. Thus, a \worker{} chooses the next write for
execution by first removing the per-row queue at the head of the \scheduler{} queue and then executing the write at its head.
When the \worker{} finishes executing the write, the per-row queue is
reinserted into the \scheduler{} queue.


To demonstrate how \sys{}'s \scheduler{} and \worker{}s operate, we return to our motivating
example when Alice and Bob concurrently comment on the same video. Assume Alice's
transaction $A$ commits first. It performs two operations: operation $a_1$
inserts one comment row, and $a_2$ increments the video's comment counter. Bob's
transaction $B$ performs comparable operations $b_1$ and $b_2$.

Figure~\ref{fig:scheduler} shows, from left to right, how two \worker{}s execute the four writes. The first panel shows the initial data
structures after processing all operations. $a_1$ and $a_2$ then begin
executing (second panel). Assume $a_2$ finishes
before $a_1$. When it does, its corresponding queue is first reinserted at the tail of the \scheduler{} queue (third panel). Finally, $b_1$ starts
executing (fourth panel). This process continues until all writes finish.

We prove below that \sys{}'s
row-granularity execution never imposes more constraints on the backup than any concurrency control protocol imposes on the primary.
Thus, \sys{} can always execute with equal or greater parallelism as the primary and keep up.

\subsubsection{\sys{}'s Execution Can Keep Up.}
\label{sec:design:scheduler-proof}

Both the primary's concurrency control and the backup's \ccc{} protocols can be viewed as functions from a set of logs to a set of sets of \textit{execution schedules.} Given a log, the primary's threads and the backup's workers execute its writes according to one of the schedules in its image. We say a schedule is \textit{valid} if the schedule of writes produces an equivalent database state as serially executing the writes in the log. In the remainder of this section, we only consider the set of \textit{valid protocols}, those whose images contain only sets of valid schedules, and denote it as $\mathcal{A}$. Note that a primary's concurrency control protocol is always in $\mathcal{A}$ because the primary's durability guarantees that its log, when serially executed, reproduces its state.

Let $w_T[x]$ denote a write to row $x$ by transaction $T$. As before, $T \prec U$ denotes transaction $T$ precedes transaction $U$ in the log, and in a slight abuse of notation, let $w_T[x] \prec w_U[y]$ denote write $w_T[x]$ precedes write $w_U[y]$ in the log.  Similarly, $w_T[x] < w_U[y]$ denotes $w_T[x]$ precedes $w_U[y]$ in an execution schedule. A \textit{row-granularity protocol} guarantees that for all logs and all pairs of writes $w_T[x]$ and $w_U[y]$, if $x = y$ and $T \prec U$, then $w_T[x] < w_U[y]$ in all of its possible schedules.

\begin{thm}
  \label{thm:scheduler-keeps-up}
  Let $R \in \mathcal{A}$ be a row-granularity protocol. Given a log, there does not exist a valid protocol $P \in \mathcal{A}$ that imposes fewer constraints on its corresponding set of execution schedules than $R$.
\end{thm}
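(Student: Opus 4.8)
The plan is to show that $R$'s ordering constraints form a subset of those of \emph{every} valid protocol, so none can impose strictly fewer. To make ``fewer constraints'' precise, I would associate to each protocol $P$ the set $C(P)$ of ordered write-pairs that $P$ forces, i.e., all pairs $(w_T[x], w_U[y])$ with $w_T[x] < w_U[y]$ in every schedule $P$ permits for the given log. Then ``$P$ imposes fewer constraints than $R$'' means $C(P) \subsetneq C(R)$, and the theorem reduces to proving $C(R) \subseteq C(P)$ for all $P \in \mathcal{A}$.

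First I would characterize $C(R)$ directly from the definition of a row-granularity protocol. By definition $R$ forces $w_T[x] < w_U[y]$ exactly when $x = y$ and $T \prec U$, and imposes no ordering across distinct rows; hence $C(R)$ is precisely the set of same-row, log-ordered write pairs. (Implicitly I take $R$ to be the canonical row-granularity protocol that enforces only this per-row order, matching the \scheduler{}'s per-row FIFO queues of Section~\ref{sec:design:scheduler}; were $R$ to impose any additional ordering, that would only enlarge $C(R)$ and make the containment easier to establish.)

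The heart of the argument is the necessity direction: every valid protocol already enforces all of these same-row orderings. Fix any $P \in \mathcal{A}$ and same-row writes with $w_T[x] \prec w_U[x]$, and suppose for contradiction that some schedule $P$ permits executes $w_U[x]$ before $w_T[x]$. I would argue this schedule is invalid: because each written value is uniquely identifiable, the swap leaves row $x$ holding the value of $w_T[x]$ at the moment both writes have executed but the next write to $x$ has not, whereas serial execution of the corresponding log prefix leaves $x$ holding $w_U[x]$'s value. So the schedule fails to produce a database state equivalent to serial execution, contradicting $P \in \mathcal{A}$. Hence $w_T[x] < w_U[x]$ in all of $P$'s schedules, every same-row log-ordered pair lies in $C(P)$, and thus $C(R) \subseteq C(P)$; no valid $P$ can satisfy $C(P) \subsetneq C(R)$.

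I expect this necessity step to be the main obstacle, because it forces me to pin down exactly what ``equivalent database state'' means: under a naive final-state-only reading, a later write to $x$ could overwrite both reordered writes and mask the swap, so the argument must instead appeal to the per-row state \emph{throughout} execution (equivalently, that each row must always reflect a prefix of the log) or to an online/prefix-closure property of the protocol. To round out the picture I would also verify the complementary sufficiency fact---that writes to distinct rows commute, so any schedule respecting per-row order is valid---which shows $R$'s permitted schedules are exactly the valid ones, and hence that $C(R)$ cannot be shrunk while remaining in $\mathcal{A}$.
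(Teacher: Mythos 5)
Your proposal takes essentially the same route as the paper's proof sketch: both argue by contradiction that any valid protocol permitting a schedule with $w_U[x] < w_T[x]$ when $T \prec U$ yields a database state inequivalent to serially executing the log, so every valid $P$ must already contain $R$'s per-row ordering constraints. Your additional care---formalizing ``fewer constraints'' via the forced-pair sets $C(P)$, and flagging that a purely final-state reading of ``equivalent'' could let a later write to $x$ mask the swap---actually addresses a subtlety the paper's one-paragraph sketch leaves implicit, but the underlying argument is the same.
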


\begin{proof}[Proof Sketch]
  Given a log and two transactions $T$ and $U$ such that $T \prec U$, $R$ imposes one constraint on the possible executions of their writes: if $w_T[x]$ conflicts with $w_U[x]$, then $w_T[x] < w_U[x]$.
  
  Assume to contradict there is a valid protocol $P$ that does not impose the above constraint. Then in one of the resulting schedules, $w_U[x] < w_T[x]$. A serial execution of the log, however, always executes $w_T[x]$ before $w_U[x]$ because $T \prec U$, and thus $w_T[x] \prec w_U[x]$. As a result, this execution is not equivalent to the serial execution of the log, contradicting that $P$ is valid.
\end{proof}

The consequence of this proof is that all valid concurrency control protocols
on the primary, regardless of isolation level, must at a minimum impose the
constraints imposed by a row-granularity protocol on the backup. If a
backup employs such a protocol, then regardless of how much parallelism is exploited by the primary's concurrency control during its execution, an execution with an equal degree of parallelism is available to the backup. Thus, the proof shows that row-granularity execution never limits the backup's ability to match the primary's parallelism.

But practical considerations, such as poorly designed concurrency mechanisms or bottlenecks in the \scheduler{}, may prevent the backup's \worker{}s from always keeping up with the primary despite employing a row-granularity protocol. To avoid complicating the formalism above, we thus do not prove that any specific implementation guarantees bounded replication lag. Instead, our experimental evaluations in Sections~\ref{sec:myrocks-evaluation} and~\ref{sec:cicada:evaluation} verify that the principles learned here translate into bounded lag in practice.

\subsection{\Snapshotter{} \& Read-Only Transactions}
\label{sec:design:snapshot}

\sys{}'s \snapshotter{} uses database snapshots to guarantee monotonic prefix consistency without blocking \worker{}s. Table~\ref{tbl:logical-storage-api} shows the logical storage API needed to implement it. While this API is not
backward-compatible with the storage engines in some commercial
databases~\cite{postgres2020storage,innodb,rocksDB,leveldb}, it can be implemented efficiently in many modern databases where \worker{}s can explicitly assign timestamps to their writes~\cite{levandoski2011deuteronomy,lim2017cicada,diaconu2013hekaton,kim2016ermia}. We elaborate on how this difference affects the \snapshotter{}s of \sysmyrocks{} and \syscicada{} in Sections~\ref{sec:myrocks-impl:snapshot} and~\ref{sec:cicada:impl}, respectively.


\begin{table}[t]
  \small
  \centering
  \setlength{\tabcolsep}{0.5\tabcolsep}
  \begin{tabular}{@{}l l@{}}
    \toprule
    \textbf{Function} & \textbf{Description}
    \\ \midrule
    $\textsc{NewSnapshot}(\mathcal{D}) \to \mathcal{S}$ & Create empty
                                                          snapshot.
    \\ \addlinespace[1mm]
    $\textsc{Merge}(\mathcal{S}_1, \mathcal{S}_2) \to \mathcal{S}_3$ & Merge
    $\mathcal{S}_1$ and $\mathcal{S}_2$; $\mathcal{S}_3$ reflects all \\
    & writes to both, in order.
    \\ \addlinespace[1mm]
    $\textsc{Read}(\mathcal{S}, r) \to v$ & Read value from snapshot.
    \\ \addlinespace[1mm]
    $\textsc{Insert}(\mathcal{S}, r, v)$ & Add
    row-value pair to snapshot.
    \\ \addlinespace[1mm]
    $\textsc{Update}(\mathcal{S}, r, v)$ & Update
    row-value pair.
    \\ \addlinespace[1mm]
    $\textsc{Delete}(\mathcal{S}, r, v)$ & Delete row-value pair.
    \\ \bottomrule
  \end{tabular}
  \vspace{5mm}
  \caption{Logical storage interface for \sys{}. Snapshots are
    sequences of writes. Empty snapshots are created, and two snapshots can be
    merged. \Worker{}s add writes to
    snapshots.}
  \label{tbl:logical-storage-api}
\end{table}

As Table~\ref{tbl:logical-storage-api} shows, new snapshots are created from the database. Logically, a snapshot is a sequence of writes, so it is initially empty. Writes directly modify a snapshot. Two
snapshots $\mathcal{S}_1$ and $\mathcal{S}_2$ can be merged to produce a
third $\mathcal{S}_3$ that reflects the writes applied to both, with all
writes in $\mathcal{S}_1$ ordered before those in $\mathcal{S}_2$. Finally,
the latest version of a row's value can be read from a snapshot.

The \snapshotter{} uses this API to maintain three database
snapshots, logically representing the current, next, and future. The current
snapshot is initially empty, always prefix-complete, and serves read-only
transactions. The next and future snapshots are initially empty. \Worker{}s only modify the next and future snapshots.

Figure~\ref{fig:snapshots} illustrates how the \snapshotter{} incorporates a
write into the current snapshot while maintaining monotonic prefix
consistency. \sys{}'s \snapshotter{} uses two sequence numbers to delimit
the three snapshots. The current snapshot includes all writes up to sequence
number $c$. All writes with
sequence numbers between $c$ and $n$ (inclusive) update the next snapshot; all
writes with sequence numbers greater than $n$ update the future snapshot.


When all writes with sequence numbers between $c$ and $n$ finish executing, the
current and next snapshots together form a new, prefix-complete snapshot. The
\snapshotter{} then merges them, and the result replaces the current. At the
same time, it performs four additional operations: $c$ is updated to reflect
the new current snapshot; $n$ is advanced; the next snapshot is replaced with
the future snapshot; and a new future snapshot is created.



To satisfy monotonic prefix consistency, the \snapshotter{} always aligns $n$
with a transaction boundary.
Thus, the
next snapshot always reflects a set of complete transactions before being merged.


Because they execute against different snapshots, \worker{}s and
read-only transactions execute entirely in parallel. But to guarantee
bounded replication lag, \worker{}s must be given higher scheduling
priority than read-only transactions threads. To avoid starvation, we assume
they execute on separate cores (beyond the $m$ assumed in
Section~\ref{sec:design:scheduler-proof}), or if they execute on the same cores, there are enough spare CPU cycles
to process all read-only transactions.


\section{\sysmyrocks{} Implementation}
\label{sec:myrocks-impl}

\sysmyrocks{} was developed to solve replication lag at \fb{}, so backward compatibility and ease of deployment were primary concerns. To remain backward-compatible with \myrocks{} (a fork MySQL that uses RocksDB as its
storage engine~\cite{mysql,myRocks,rocksDB}), \sysmyrocks{} imposes some additional constraints on its execution, beyond those discussed
in Section~\ref{sec:design}. In this section, we describe the
implementation, highlighting how it leverages MyRocks's existing
features~\cite{mysql,myRocks} and differs from our design.

\subsection{Scheduling \& Execution}
\label{sec:myrocks-impl:scheduler}

\begin{figure}[t]
  \centering
  \includegraphics[page=3,width=0.85\linewidth]{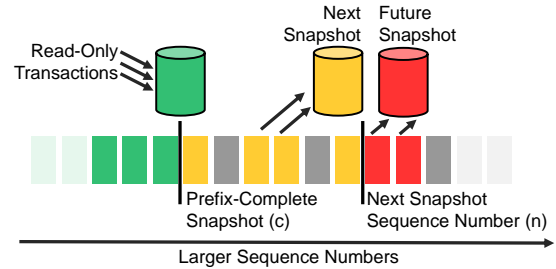}
  \caption{\sys{}'s \snapshotter{}. Writes in color (grey) have finished
    (not finished) executing.}
  \label{fig:snapshots}
\end{figure}

The \scheduler{} leverages MyRocks's row-based-logging
subsystem~\cite{mysql,myRocks}, so \sysmyrocks{}
does not require any changes to the primary or its log. For each
operation, the log includes the set of written keys and corresponding
values.

\myrocks{}'s row-based logging subsystem assumes that all of a transaction's writes are executed by the same \worker{}. To keep its changeset small (630 lines of C++ code), \sysmyrocks{} also enforces this constraint, in addition to those described in Section~\ref{sec:design:scheduler}.

To do so, the \scheduler{} builds a linked list of each transaction's writes. After the \scheduler{} adds all of a transaction's writes to
their per-row queues, it puts the transaction's first write in the \scheduler{} queue. After a \worker{} dequeues it, the \worker{} first waits
until the write reaches the head of its per-row queue (i.e., it is safe to execute), then executes it, and finally follows the pointer to the next write in the transaction, repeating this for any remaining writes. Once finished, the \worker{} takes the next write from the \scheduler{} queue.

The one-thread-per-transaction execution model and having \worker{}s pick up transactions in commit order greatly simplified the implementation, but this simplicity is not free. \sysmyrocks{}'s
implementation is more constrained than our design and thus executes some workloads with less parallelism. Despite these extra constraints, however, our evaluation demonstrates that \sysmyrocks{} can keep up with its primary.

\subsection{\Snapshotter{} \& Read-Only Transactions}
\label{sec:myrocks-impl:snapshot}

The storage engines in some widely deployed databases~\cite{postgres2020storage,innodb,rocksDB,leveldb}, including \myrocks{}, cannot easily implement the API in Table~\ref{tbl:logical-storage-api}. Unfortunately,
they are also complex, comprising tens to
hundreds of thousands of lines of code~\cite{rocksDB,leveldb}. To again keep \sysmyrocks{}'s changeset small, we opted to implement its \snapshotter{} without requiring changes to RocksDB's interface.

In \myrocks{}, snapshots are read-only and can only be taken of the database's current state. Neither \worker{}s nor the \snapshotter{} have fine-grained control over which writes are included in a snapshot (e.g., by taking a snapshot as of some specified timestamp or version number). As a result, the \snapshotter{} must impose some additional constraints on the \worker{}s to ensure the entire database is prefix-consistent when taking a new snapshot. 



Instead of three snapshots, \sysmyrocks{}'s \snapshotter{} logically maintains two: It uses a current snapshot, which is always prefix consistent and used to serve read-only transactions. The next snapshot, however, is replaced by the database. $c$ still tracks the writes included in the current snapshot.

To merge the current and next snapshots, the \snapshotter{} performs the following: First, it chooses $n$, the sequence number of the last write to be included in the next snapshot. To ensure the merged snapshot is prefix-consistent, choosing $n$ also blocks \worker{}s from executing writes with sequence numbers greater than $n$ until after the
snapshot is taken. (If the storage engine supports transactions, as in RocksDB~\cite{rocksDB}, the \worker{}s only need delay committing these writes.) Second, after all writes with sequence numbers between $c$ and $n$ (inclusive) execute, the \snapshotter{} takes a new snapshot of the database and replaces the current one. $c$ and $n$ then advance as described previously. Advancing $n$ allows blocked \worker{}s to proceed with their writes.

If taking a snapshot is \fbok{computationally expensive}, the blocking above may
lead to spikes in replication lag. To combat this, our implementation allows
database administrators to tune the approximate snapshot frequency, $I$, in milliseconds. Because the
storage engine may not be prefix consistent exactly every $I$ milliseconds, the \snapshotter{} advances $n$ by an estimate of the number of writes \worker{}s will execute in the next $I$ milliseconds.


By tuning $I$, administrators can ensure replication lag returns to satisfactory
levels between snapshots provided lag can decrease between
snapshots. This implies the backup must execute each write marginally faster
than the primary. But we found this assumption reasonable in practice, and our evaluation further supports this.


\section{\sysmyrocks{} Evaluation}
\label{sec:myrocks-evaluation}

Our evaluation explores the following questions:
\begin{enumerate}
\item Does \sysmyrocks{} help engineers avoid potential disasters caused by
  optimizations of realistic workloads? (\S\ref{sec:evaluation:tpcc})

\item Does \sysmyrocks{} always keep up with the primary? (\S\ref{sec:evaluation:throughput})

\item Does \sysmyrocks{} guarantee MPC for read-only transactions without causing
  unbounded replication lag? (\S\ref{sec:evaluation:reads})
\end{enumerate}

\noindentparagraph{Experimental Setup.} All experiments ran on the CloudLab Wisconsin
platform~\cite{duplyakin2019cloudlab} with three servers located in one
datacenter: one for load generation, the primary, and the
backup. Round-trip times between machines were less than
\SI{100}{\micro\second}. Each machine had two 2.20 GHz Intel Xeon processors
with ten cores each, hyper-threading enabled, \SI{192}{GB} of RAM, a \SI{10}{Gb}
NIC, and a \SI{480}{GB} SSD.

All results were from 120-second trials. We omit all data from the first and
last 15 seconds of each trial to avoid experimental artifacts. Unless otherwise specified, we
ran each experiment 5 times and report the median result. For each experiment,
we generated load with a fixed number of closed-loop clients. The number of
clients and \worker{}s were set to maximize the primary and backup's
throughput, respectively. The number of \worker{}s never exceeded the number of primary threads.

Where noted, an additional set of closed-loop clients sent read-only
transactions to the backup. The backup's \worker{}s and read-only threads were
pinned to separate cores.

The log and \myrocks{}'s state were asynchronously flushed to disk on both the
primary and backup~\cite{rocksDB}. In all experiments, disk, memory, and network
were not bottlenecks. For all implementations, we used read-free replication and
disabled \myrocks{}'s 2PL on the backup~\cite{rocksDB,myRocks} since the \scheduler{} already prevents conflicting writes from executing
concurrently. Finally, to stress the backup, the primary used read
committed isolation~\cite{berenson1995critique}.

\noindentparagraph{Workloads.} We use three workloads. The first is TPC-C~\cite{tpcc}, a standard OLTP benchmark simulating an
order-entry application.
The other two, insert-only and adversarial, are synthetic. In each, the
database contains one table with two integer columns, a primary key and its associated value.

Each transaction in the insert-only workload comprises a variable number of unique inserts. Each transaction in the adversarial workload comprises a variable number of unique inserts and one update; the updates
in all transactions set the same row's value to a random integer, so all transactions
conflict.

\noindentparagraph{Baselines.} \kuafu{}~\cite{hong2013kuafu}, a
state-of-the-art, transaction-granularity \ccc{} protocol, is our baseline.
\kuafu{}'s protocol is nearly identical to MySQL 8's
write-set-based parallel replication~\cite{mysql8writeset} and is strictly better than the database-granularity and epoch-based protocols used in earlier versions of MySQL~\cite{mysql,mysqlgroupcommit} and its variants~\cite{mariadb10}.
We re-implemented \kuafu{} in \myrocks{}.

\subsection{\sysmyrocks{} Prevents Potential Disasters}
\label{sec:evaluation:tpcc}

\begin{figure}[t]
  \includegraphics[width=0.85\linewidth]{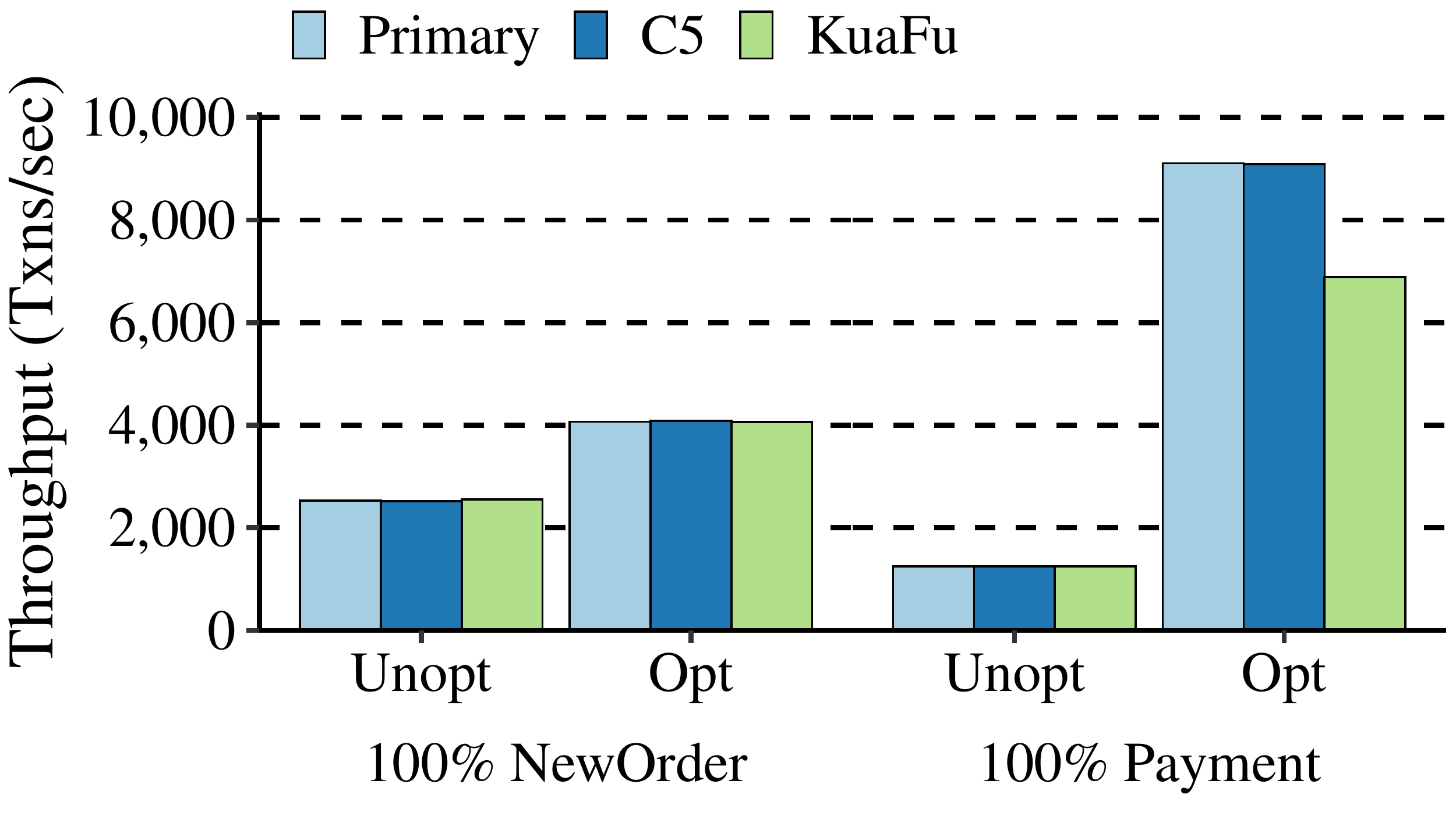}
  \caption{Throughput with 100\% NewOrder and Payment workloads
    before and after optimization. \kuafu{} lags after some optimizations while
    \sysmyrocks{} always keeps up.}
  \label{fig:tpcc}
\end{figure}

Because software and hardware improvements have accelerated the primary's
processing, primary-backup systems using transaction-granularity protocols are
brittle. Simple changes to a workload may cause
unbounded replication lag. To demonstrate this problem, we use TPC-C~\cite{tpcc}. While \kuafu{}~\cite{hong2013kuafu} keeps up on the standard benchmark workload, simple optimizations and non-standard transaction mixes
cause unbounded replication lag with the same
protocol~\cite{hong2013kuafu}.
We discuss the optimizations and our results in turn.

We optimize two of TPC-C's transactions: the NewOrder and Payment
transactions~\cite{tpcc}. In both cases, we defer higher-contention operations as much as possible while preserving application semantics. (Similar optimizations were observed in prior work~\cite{yan2016leveraging}.) In the NewOrder
transaction, the highest contention write is the increment of the district's
next order ID. In the Payment transaction, it is the update to the
warehouse's balance~\cite{tpcc}. Deferring these writes allows more parallelism on the primary.


Figure~\ref{fig:tpcc} shows the primary and backup's
throughput for \SI{100}{\percent} NewOrder and Payment workloads before and after optimization.
For the NewOrder workload, the optimization
increases the primary's throughput from \SI{2527}{} to
\SI{4067}{transactions\per\second}. For the Payment workload, the primary's throughput
increases by over \SI{700}{\percent} from \SI{1249}{} to
\SI{9105}{transactions\per\second}.

\kuafu{} keeps up with the optimized NewOrder workload. Data
dependencies between operations limit how late the district
row write can be deferred within each transaction and in turn, limits the primary's parallelism. But
\kuafu{} cannot keep up on the optimized Payment workload; its throughput
peaks at \SI{6889}{transactions\per\second}. Conversely, \sysmyrocks{} keeps
up.

If the optimization to the Payment transaction were made to a production
workload, \kuafu{} would cause significant replication lag, with 2216
transactions queuing at the backup every second. This rate is larger than the
one that induced replication lag of nearly 2 hours in production at \fbprom{} (discussed further in Section~\ref{sec:deployment}). On the other
hand, \sysmyrocks{} thwarts such a disaster.

\subsection{\sysmyrocks{} Always Keeps Up}
\label{sec:evaluation:throughput}

To validate that \sysmyrocks{} always keeps up, we measured the primary and \sysmyrocks{}'s throughput using the insert-only and adversarial workloads. The two
are on opposite ends of the contention spectrum: no transactions
conflict in the former, while all do in the latter. We also present \kuafu{}'s results.


\begin{figure}[t]
  \centering
  \includegraphics[width=0.85\linewidth]{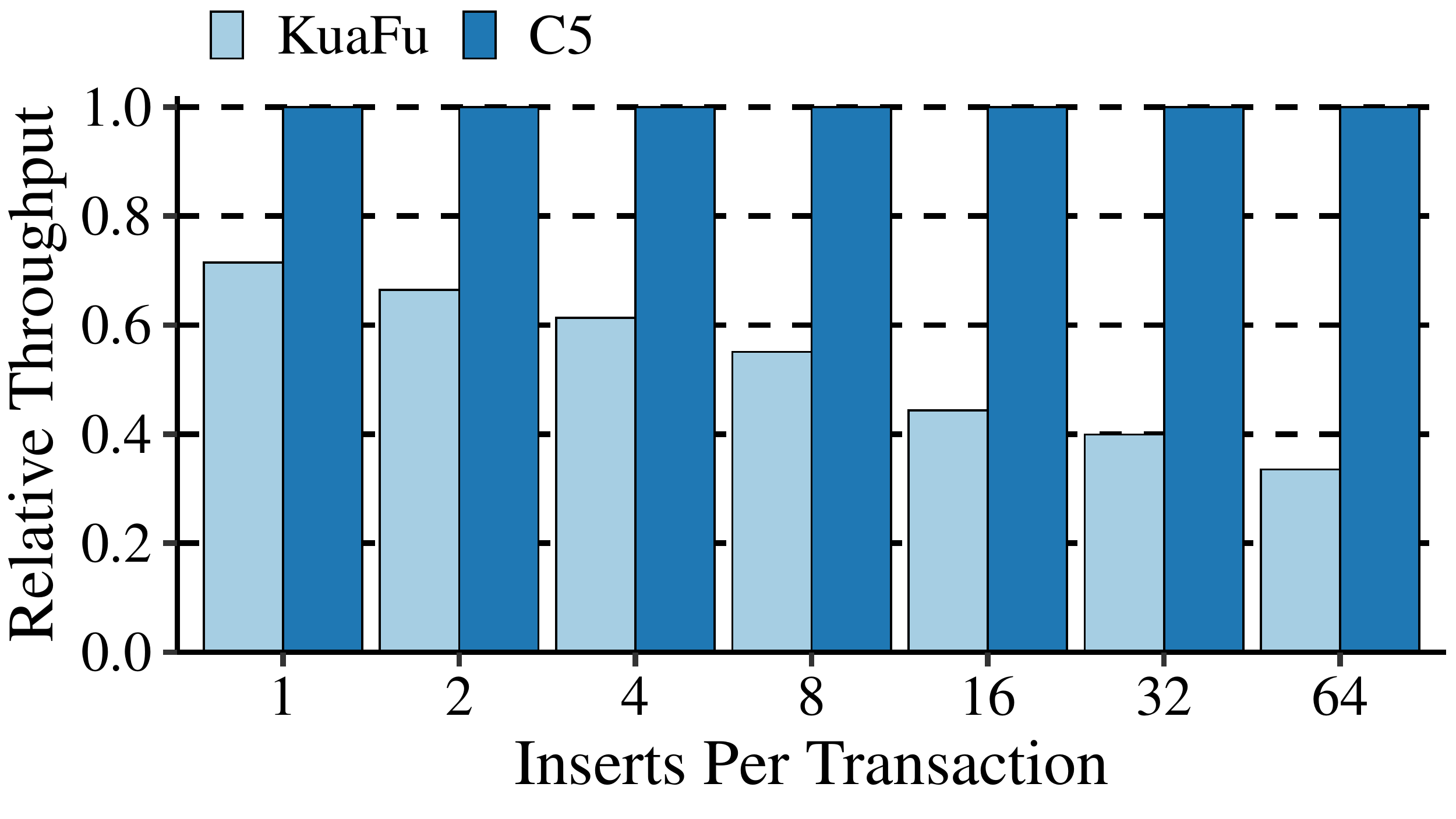}
  \caption{Backup's throughput relative to the primary's for the adversarial
    workload. As the number of inserts per transaction increases,
    transaction-granularity protocols lag more while \sysmyrocks{} always keeps up.}
  \label{fig:adversarial-workload}
\end{figure}

Because all transactions are non-conflicting, the insert-only workload stresses the primary's concurrency control and backup's \ccc{}.
Here, \myrocks{}'s throughput is about \SI{40500}{transactions\per\second}.
\sysmyrocks{} keeps up with the primary, indicating that its scheduling mechanisms
have sufficiently low overhead. As expected, \kuafu{} also keeps up. With both protocols, incoming writes can be executed immediately.

To verify \sysmyrocks{}'s \scheduler{} is not a bottleneck, we ran the same
experiment offline. We loaded the primary with inserts as above but
delayed replication. Once all writes finished and the resultant log
was transferred, we enabled \sysmyrocks{}'s \scheduler{} and \worker{}s.
We used sufficient \worker{}s, so the \scheduler{} was the bottleneck. \sysmyrocks{}'s
\scheduler{} processed \SI{95683}{transactions\per\second}, more than double \myrocks{}'s throughput.

Figure~\ref{fig:adversarial-workload} shows each implementation's performance
on the adversarial workload. We plot the backup's throughput relative to the
primary's as we vary the number of non-conflicting inserts per transaction from
1 to 64. Every transaction updates the same row. Despite the high contention,
the primary, using 2PL~\cite{bernstein-book}, executes the non-conflicting inserts
that precede the conflicting update in parallel. Because all transactions
conflict, \kuafu{} serializes them. Thus, the primary's advantage over \kuafu{}
increases with the the number of inserts. \kuafu{}'s
throughput drops from \SI{70}{\percent} to just \SI{38}{\percent} of the
primary's. On the other hand, \sysmyrocks{} executes the
non-conflicting inserts in parallel so always keeps up.

\subsection{\sys{} Serves Reads With Bounded Lag}
\label{sec:evaluation:reads}

\sysmyrocks{}'s implementation blocks writes from committing while taking a
snapshot. But it also exposes a parameter to tune the frequency of snapshots.
With periodic snapshots, \sysmyrocks{} serves read-only transactions in parallel
with writes, and thus, steady-state replication lag remains bounded despite
additional load from read-only clients. To validate these claims, we measure
replication lag as we increase the read-only load on \sysmyrocks{}.

\begin{figure}[t]
  \centering
  \includegraphics[width=0.85\linewidth]{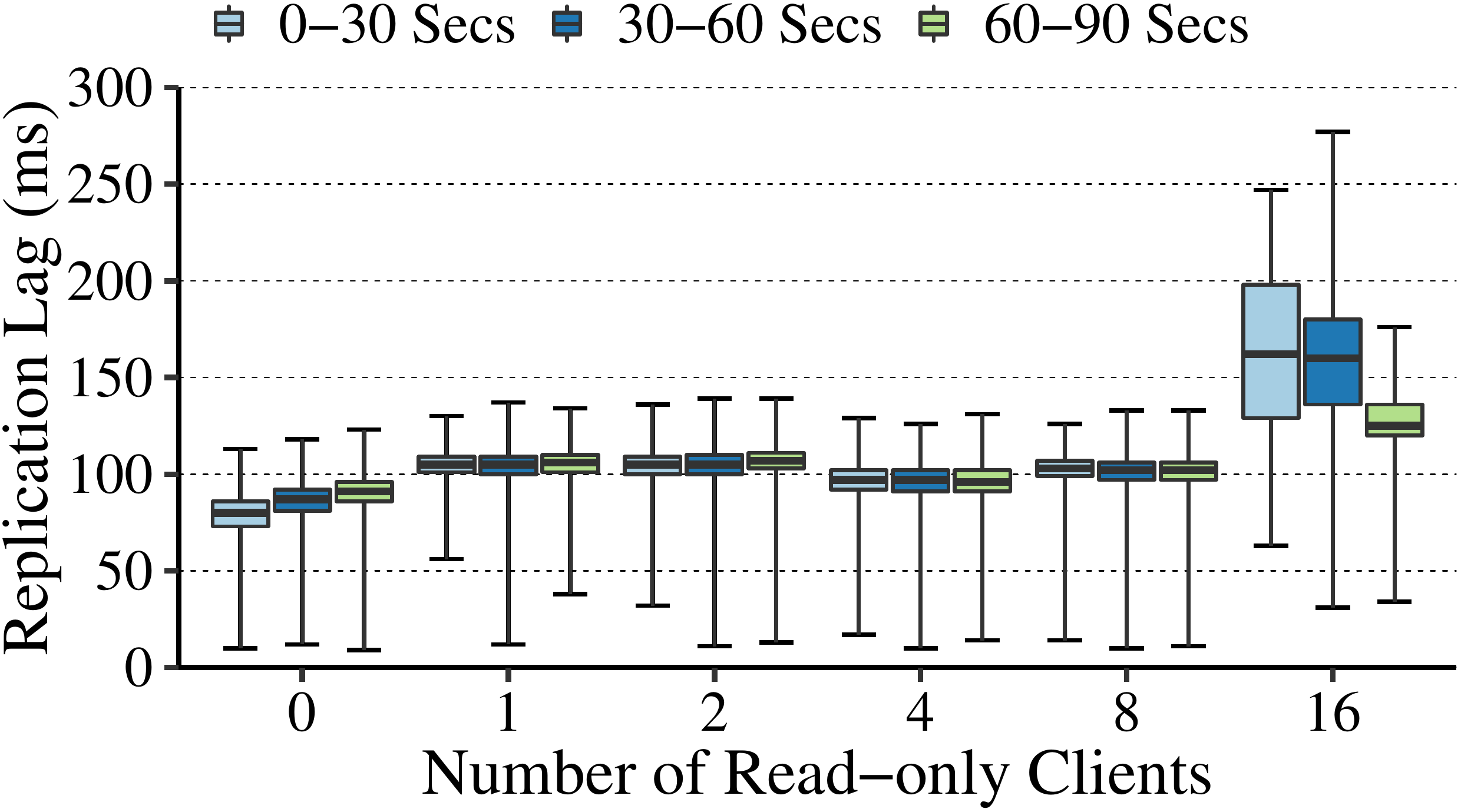}
  \caption{Replication lag of read-write transactions as the number of read-only clients increases. Measurements are split into three 30-second periods.
    Whiskers show the min and max; boxes show the quartiles.}
  \label{fig:read-only-replication-lag}
\end{figure}

Figure~\ref{fig:read-only-replication-lag} plots the distribution of replication
lag measurements for read-write transactions over three consecutive, 30-second
periods with the insert-only workload. The whiskers show the minimum and
maximum, and the boxes show the quartiles. We show one trial; results from other
trials were similar. For each read-write transaction, we measure replication
lag as the difference between when it commits on the primary and when it is
included in the current snapshot. Snapshots were taken every
\SI{10}{\milli\second}. Each read-only transaction executes a random point query
on the table's primary key; queries could select a nonexistent key. We vary
the number of clients from 0 to 16.

The replication lag distributions with zero read-only clients are the baselines.
Replication lag remains bounded despite \sysmyrocks{} blocking \worker{}s
to take snapshots. The remaining plots demonstrate that \sysmyrocks{} provides
bounded replication lag as the load of read-only clients
increases. Median replication lag increases from \SI{87}{\milli\second} with 0
clients to \SI{160}{\milli\second} with 16, and the maximum was always less than
\SI{300}{\milli\second}.

\begin{figure}[t]
  \centering
  \includegraphics[width=0.85\linewidth]{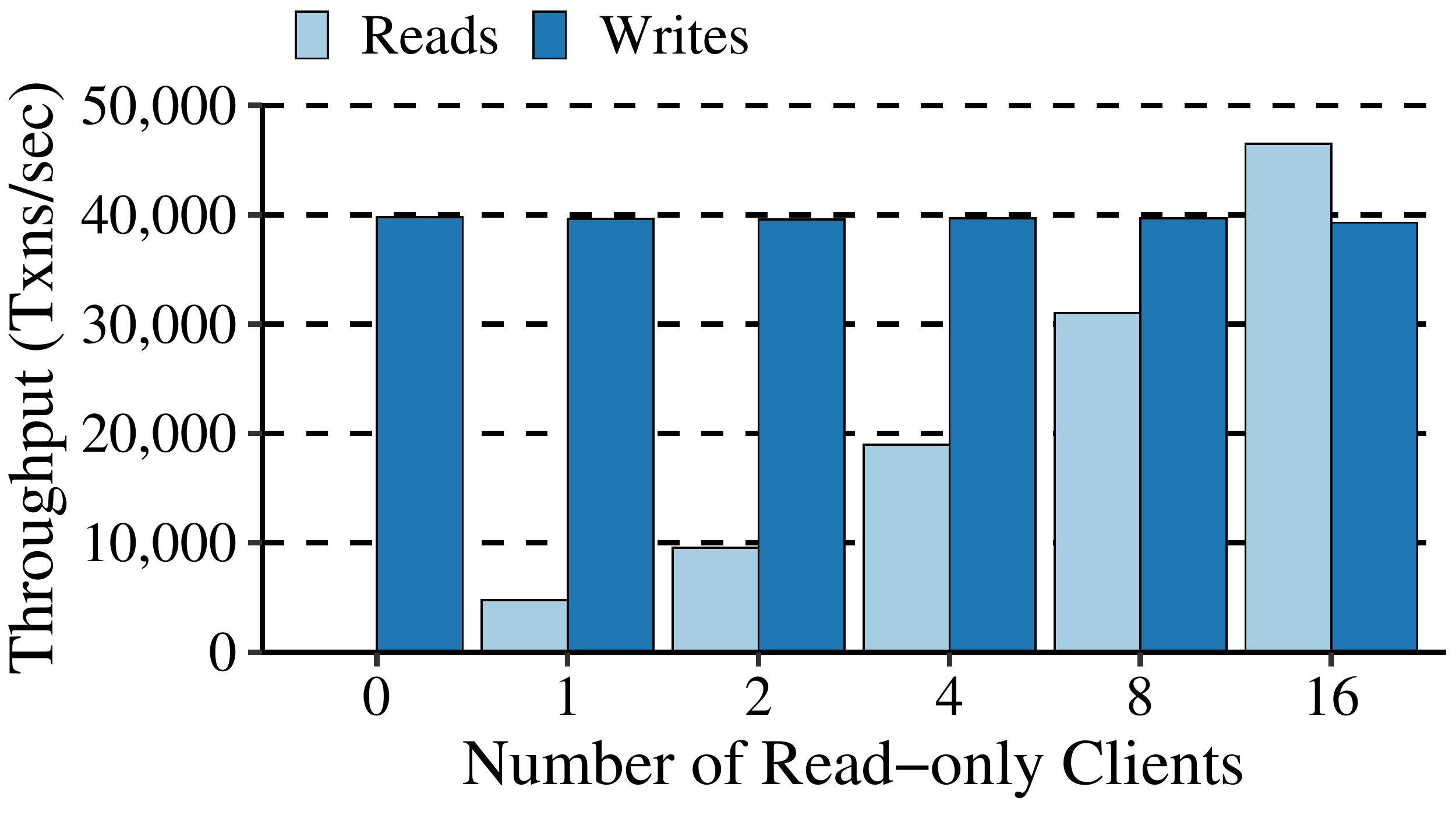}
  \caption{Backup's read-only and read-write transaction throughput as the read-only load increases. \sysmyrocks{} isolates \worker{}s from read-only transactions.}
  \label{fig:read-only-tput}
\end{figure}

Figure~\ref{fig:read-only-tput} plots the backup's throughput of read-only and
read-write transactions for the same experiment.
\sysmyrocks{}'s throughput always matches the primary's. (Note that differences
in the primary's throughput from prior experiments is due to variations in
\myrocks{}'s performance across trials.) Further, \sysmyrocks{} isolates \worker{}s from read-only transactions. With steady write throughput, read-only throughput
increases from \SI{4755}{transactions\per\second} with 1 client to
\SI{46500}{transactions\per\second} with 16, more than doubling the system's total
throughput.


\section{\syscicada{}}
\label{sec:cicada}

Unlike \sysmyrocks{}, our implementation of \sys{} in \cicada{}~\cite{lim2017cicada}, an in-memory
multi-version database, is faithful to our design. In this section, we first provide background on \cicada{}. We then describe the implementation of \syscicada{}'s \scheduler{}, \worker{}s, and \snapshotter{}. We conclude with our evaluation, which demonstrates \sys{} can keep up with a modern concurrency control protocol.

\subsection{\cicada{} Background}
\label{sec:cicada:cicada}

\cicada{}'s multi-version storage engine supports reads,
inserts, updates, and deletes. The storage engine is implemented as an array indexed by an internal row ID. (Externally meaningful keys are mapped to row IDs through indices.) Array entries are linked lists of row versions in descending
timestamp order. In addition to a pointer to the next oldest version,
a row version contains data, a status, and
read and write timestamps. The latter three are used by \cicada{}'s concurrency control protocol.

\cicada{} uses a variant of multi-version timestamp ordering~\cite{bernstein-book,lim2017cicada}. Each client thread maintains a local clock. (\cicada{} does not support networked clients.) The local clocks are loosely synchronized and individually return increasing values.

A client uses its clock to assign a unique timestamp to each transaction.
As the transaction executes, each write
creates a new row version, and the transaction's timestamp becomes the version's write timestamp. Further, reading a version updates its
read timestamp to the max of the transaction's timestamp and
the version's current read timestamp. \cicada{} uses the timestamps to check
if a transaction can commit under
serializability~\cite{papadimitriou1979serializability}. Ordering transactions by their timestamps yields a valid serial schedule.


\noindentparagraph{Logging \& Replication.}
Because \cicada{} does not support networking, logging, persistence, or replication~\cite{lim2017cicada}, we emulate primary-backup replication on one server.

To this end, we implement a minimal prototype logger to allow replay of the primary's writes on the backup. The primary only writes logs to memory. After execution and validation but before committing, each client thread logs its changes to a per-thread log. The per-thread logs are coalesced into a single, totally ordered log before the backup's \scheduler{}, \worker{}s, and \snapshotter{} start.

The log is divided into fixed-size segments, each backed by a \SI{2}{\mebi\byte} huge page~\cite{hugetlbfs}. Each segment's header indicates the number of log records it contains. For simplicity, the logger ensures transactions never span segment boundaries.

A client creates a log record for each write in a transaction. Each record contains the following: a table ID, a row ID, the write's timestamp, and a full copy of the row version. Further, the log contains two pieces of metadata to be used by the \scheduler{}, as described below: First, each segment header contains a Boolean \texttt{preprocessed} flag. Second, each record contains an unused, 64-bit \texttt{prev\_timestamp} field.

\subsection{Implementation}
\label{sec:cicada:impl}

\noindentparagraph{Scheduling \& Execution.} For simplicity, we describe the implementation below as if replicating one table. The full implementation supports multiple table by instead keeping a queue for each pair of table and row IDs.

To guarantee monotonic prefix consistency, \syscicada{}'s \scheduler{} must ensure its \worker{}s execute each row's writes in the order they appear in the log. As described in Section~\ref{sec:design:scheduler}, it must construct a FIFO queue of writes to each row. But dynamically allocating and managing these queues prevented the single-threaded \scheduler{} from keeping up with \cicada{}.

The \scheduler{} instead embeds the per-row FIFOs in the log by setting each log record's \texttt{prev\_timestamp} to the timestamp of the write to the same row that immediately precedes it. More specifically, to calculate a record's \texttt{prev\_timestamp}, it maintains a map of row IDs to the write timestamp of the last write to that row (or zero if it is a new row). Thus to process a record, the \scheduler{} reads the value from the map using the log record's row ID, updates the record's \texttt{prev\_timestamp}, and finally updates the map's value with the record's write timestamp. After the \scheduler{} processes all of a segment's records, it sets its \texttt{preprocessed} flag to true.

\Worker{}s are assigned to segments in a round robin order. Once the \scheduler{} finishes processing a segment, the \worker{} begins executing its writes, one for each log record, in three steps: First, it uses the record's \texttt{prev\_timestamp} to see if its write is safe to execute. If \texttt{prev\_timestamp} is equal to the write timestamp of the row version at the head of the storage engine's version list, then the logs record's write should be executed next. Otherwise it is deferred. Second, if the write is safe, the \worker{} allocates a new row version and copies in the necessary data from the log. Third, the new version is installed as the head of row's version list.

Each \worker{} maintains a local FIFO queue of deferred writes and periodically (at the end of each segment) re-checks these writes to see if they are now safe to execute. Thus, \syscicada{}'s \worker{}s maintain a distributed, approximate version of the \scheduler{} queue described in Section~\ref{sec:design:scheduler}.


\noindentparagraph{\Snapshotter{}.} \cicada{}'s storage engine can efficiently implement the API in Table~\ref{tbl:logical-storage-api} because \worker{}s can explicitly assign timestamps to their executed writes. This allows \worker{}s to write to specific snapshots. Further, read-only transactions can execute against the current snapshot simply by using the sequence number $c$ as a timestamp. Their reads will then reflect any previously executed writes with lesser timestamps. The storage engine thus logically contains the current, next, and future snapshots.

This simplifies \syscicada{}'s \snapshotter{}.
To merge the current and next snapshots, it simply advances $c$ to $n$, and replacing the next
snapshot with the future snapshot and creating a new future snapshot occur implicitly when $c$ and $n$ advance.

Before advancing $c$ to $n$, however, the \snapshotter{} must guarantee all writes with timestamps less than or equal to $n$ finish executing. To do this, it cooperates with the \worker{}s.

Each \worker{} maintains a variable $c^\prime$ as one less than the timestamp of the write it most recently executed. Because log records are ordered by timestamp and each \worker{} processes segments in log order, the \worker{} can guarantee it will never execute another write with a timestamp less than or equal to $c^\prime$ (assuming no writes are deferred). Thus, each $c^\prime$ is a upper bound on $c$.    
When a \worker{} defers a write, it also defers updating $c^\prime$ until the write executes.

The \snapshotter{}'s implementation is thus simple; in a separate thread, it periodically calculates a new $n$ as the minimum across all $c^\prime$ and then advances $c$ to $n$. Since each $c^\prime$ has only one reader and one writer, no coordination or atomic instructions are needed. With x86's total store order, the \snapshotter{} may read a stale copy of a \worker{}'s $c^\prime$, but this does not violate correctness.

\subsection{Evaluation}
\label{sec:cicada:evaluation}

Our evaluation of \syscicada{} explores whether our design can keep up with advanced concurrency control protocols, both on realistic and synthetic workloads.

\noindentparagraph{Experimental Setup.} Experiments ran on the same machines as described in Section~\ref{sec:myrocks-evaluation}. The primary's threads and the backup's \scheduler{}, \worker{}s, and \snapshotter{} are all pinned different cores.

Because logging slows \cicada{}, we compare against \cicada{}'s performance without logging, which is an upper bound on that with logging enabled. We again use the optimal number of primary threads and backup \worker{}s, with the latter never exceeding the former. We ran experiments 5 times and report the median. Error bars show the minimum and maximum.

The workloads are the same as in Section~\ref{sec:myrocks-evaluation}, and we re-implement \kuafu{} in \cicada{} for fair comparison. In fact, we implement two versions of \kuafu{}, one optimized for low contention and the other for high contention. (The latter matches the published pseudocode~\cite{hong2013kuafu}.) We report whichever achieves better performance.

\noindentparagraph{\syscicada{} Prevents Disasters.} \cicada{}'s concurrency control is much better than \myrocks{}'s at handling contention. Thus, we expect significantly more potential for replication lag (and reliability issues).

Our results support this; \syscicada{} is necessary to prevent replication lag with the standard \SI{50}{\percent}-\SI{50}{\percent} NewOrder-Payment workload after applying the same optimizations described in Section~\ref{sec:myrocks-evaluation}. \cicada{} achieves \SI{716950}{transactions\per\second} while \kuafu{} manages only \SI{596310}{transactions\per\second}, lagging by about \SI{17}{\percent}. \syscicada{} easily keeps up, committing \SI{1062533}{transactions\per\second}. The results on the unoptimized workload are similar. The primary's throughput is lower, but \kuafu{} still lags by about \SI{13000}{transactions\per\second} (\SI{3}{\percent}).

\begin{figure}[t]
  \includegraphics[width=0.85\linewidth]{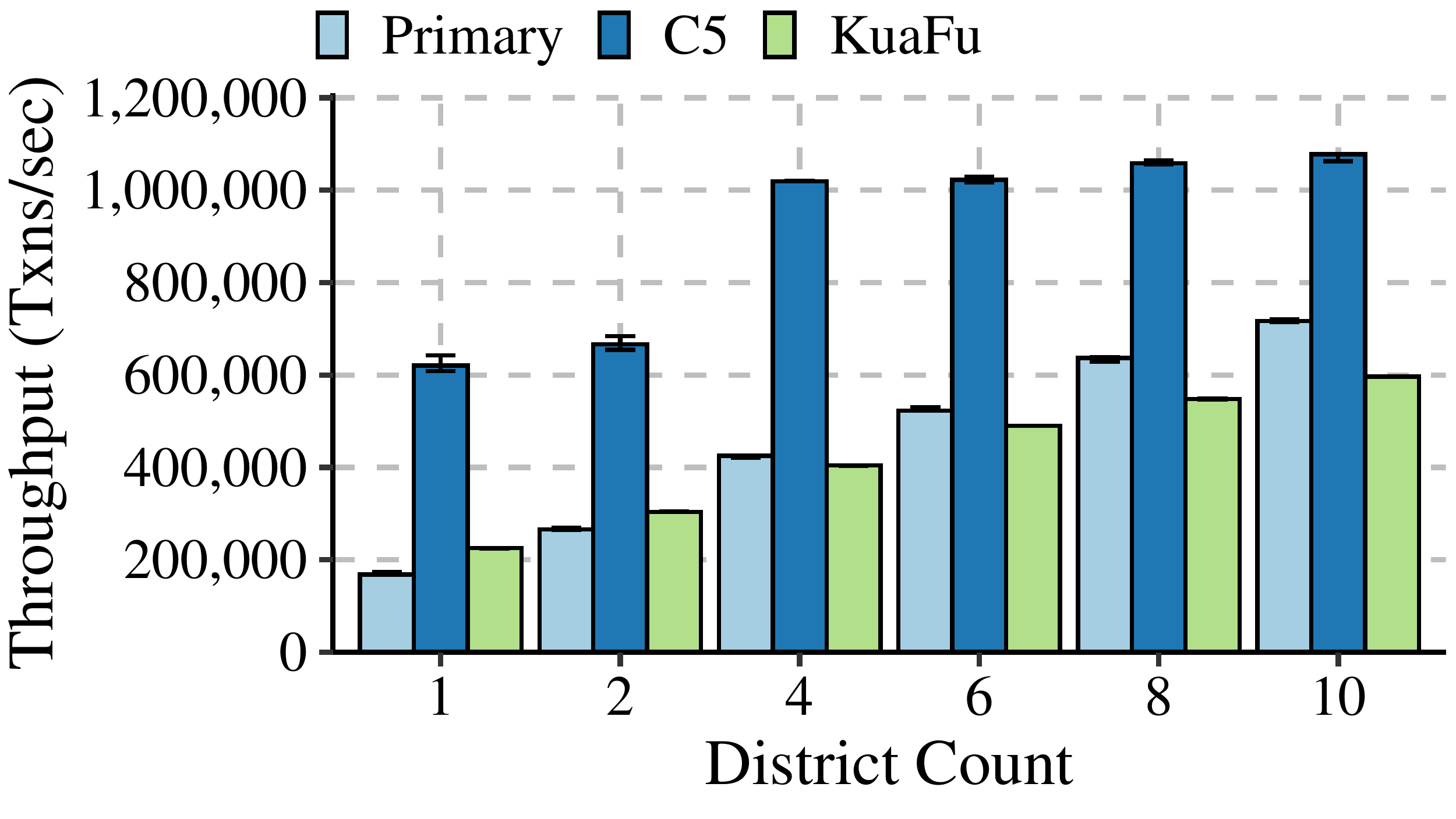}
  \caption{Primary and backup's throughputs on NewOrder-Payment workload with varying number of districts.}
  \label{fig:cicada-tpcc}
\end{figure}

We also explore how the primary and two backups behave under varying levels of contention. Figure~\ref{fig:cicada-tpcc} compares the throughputs of \cicada{}, \syscicada{}, and \kuafu{} on a \SI{50}{\percent}-\SI{50}{\percent} workload as we vary the number of districts from 10 (the standard setting) to 1. As contention increases (i.e., the number of districts decreases), \kuafu{} continues to lag until 4 districts. But below that, the additional contention harms \cicada{}'s throughput more than \kuafu{}'s by causing significantly higher abort rates (up to about \SI{75}{\percent}). As a result, with fewer districts, \kuafu{} keeps up.

To help confirm \kuafu{} lags due to the constraints imposed on its execution, we re-ran the experiment above but disabled its \scheduler{}'s calculation of transaction-granularity constraints. For each number of districts, we compared \kuafu{}'s throughput while using the same number of \worker{}s as shown in Figure~\ref{fig:cicada-tpcc}, and in all cases, \kuafu{} no longer lagged. For example, with 10 districts and 6 \worker{}s, \kuafu{}'s throughput nearly doubles from \SI{596310}{} to \SI{1101491}{transactions\per\second} when its execution is unconstrained. This far exceeds the primary's throughput of \SI{716950}{transactions\per\second}.

These results highlight the complexity of predicting when existing \ccc{} protocols will be sufficient to avoid replication lag. As shown in Figure~\ref{fig:cicada-tpcc}, \syscicada{} always keeps up and thus removes the potential for disasters.

\noindentparagraph{\syscicada{} Always Keeps Up.} We again validate \syscicada{} using the insert-only and adversarial workloads.

On the insert-only workload, \cicada{} achieves its best performance with transactions of 16 inserts each, amortizing its per-transaction overhead. On this workload, \cicada{} inserts about 87M rows/s with 20 threads. \kuafu{} and \syscicada{} both keep up. The former commits 96M rows/s with 12 \worker{}s and the latter 99M rows/s with 10 \worker{}s. In both cases, even on a workload designed to expose bottlenecks, the \scheduler{}s provide sufficient performance.


\begin{figure}[t]
  \centering
  \includegraphics[width=0.85\linewidth]{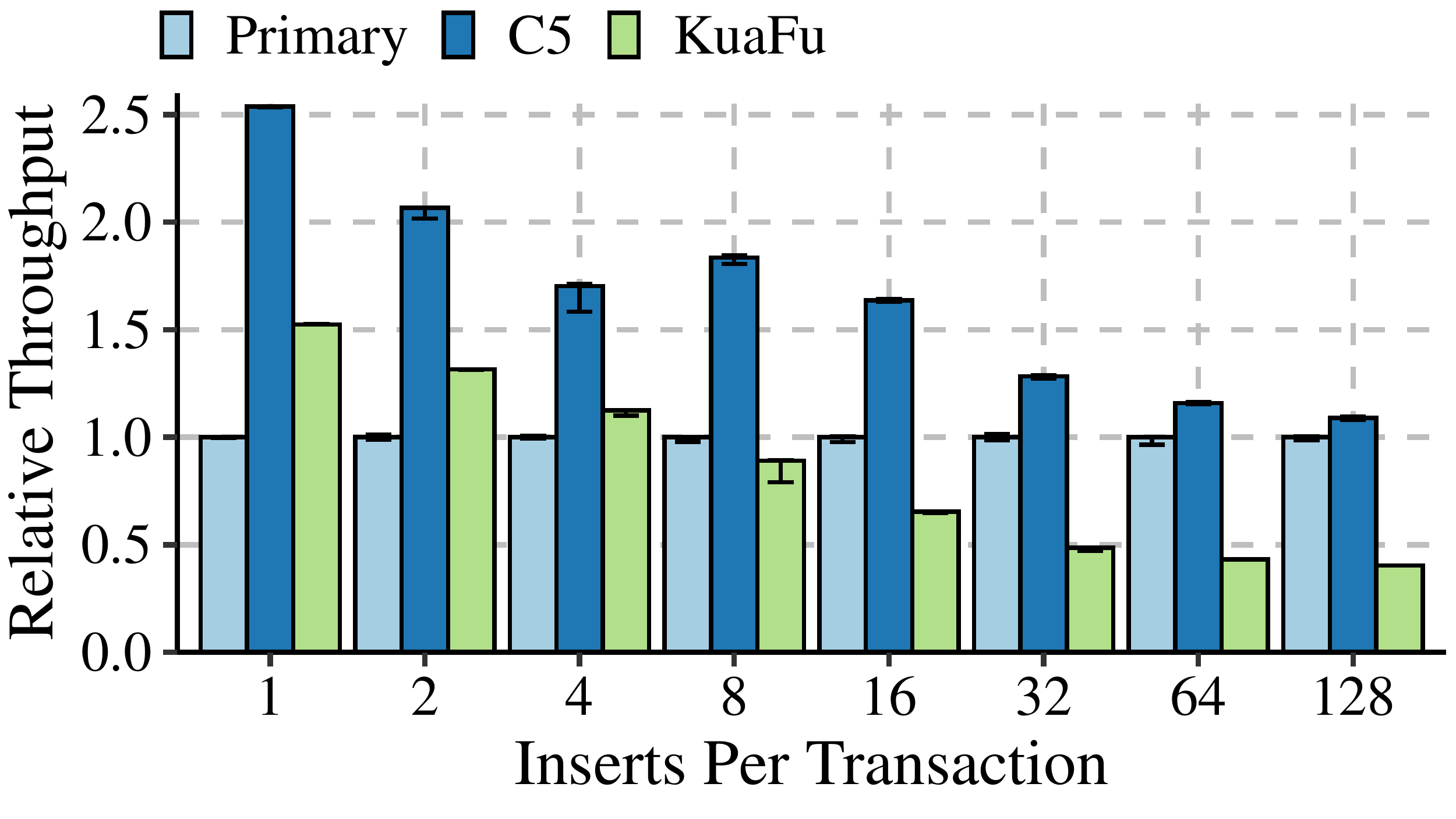}
  \caption{Backup's throughput relative to the primary's for the adversarial workload.}
  \label{fig:cicada-adversarial}
\end{figure}

Figure~\ref{fig:cicada-adversarial} compares each \ccc{} protocol's performance to \cicada{}'s
on the adversarial workload. We plot the backup's throughput relative to the
primary's median as we vary the number of non-conflicting inserts per transaction.

\syscicada{} mirrors the primary and executes the
non-conflicting inserts in parallel, so it always keeps up. This advantage is especially evident as the number of inserts per transaction increases from 4 to 8. With more parallel work per transaction, \syscicada{} leverages additional \worker{}s and its relative throughput actually increases.  

On the other hand, the primary's advantage over \kuafu{}
increases with the number of inserts
per transaction. With 128 inserts per transaction, \kuafu{}'s throughput is just \SI{40}{\percent} of the primary's.


\section{Deployment Experience}
\label{sec:deployment}

\myrocks{} databases are deployed and used inside the globally
distributed datacenters at \fbprom{}. Each shard of their social graph uses
asynchronous primary-backup replication. Further, their internal cloud provides
asynchronously replicated, multi-tenant \myrocks{} instances.
Before deploying \sysmyrocks{}, ensuring
short replication lag throughout their infrastructure was a persistent
challenge.

A version of \sysmyrocks{} has been deployed in production since mid-2017, and full
deployment finished in early 2019. Deploying \sysmyrocks{} at \thefb{} has notably
improved the reliability of their applications built atop asynchronous
primary-backup databases. Since its deployment, anecdotally, the number of
complaints about replication lag by on-call engineers
drastically decreased.

Figure~\ref{fig:fb-lag-plot} shows one example of significant lag fixed by
\sysmyrocks{}. It plots throughput over time for one shard. During daily periods of
high insert load, the primary's throughput exceeded the backup's, and lag grew to over two hours both with MySQL 5.6's default, single-threaded
\ccc{}~\cite{mysql,myRocks} and \fb{}'s earlier table-granularity protocol. After
the load spike ends, both protocols took two hours to return lag to zero.
\sysmyrocks{} eradicated the issue, keeping lag below three seconds.

Similarly, after live videos were deployed, popular videos became
significant sources of contention. Like in the motivating example in
Section~\ref{sec:bg:example}, all comments on a video caused writes to a
single row. With prior \ccc{} protocols, popular videos caused
significant replication lag. Although this problem was initially fixed by
batching comment-update requests, \sysmyrocks{} would have avoided the problem
entirely.

\begin{figure}[t]
  \centering
  \includegraphics[width=0.85\linewidth]{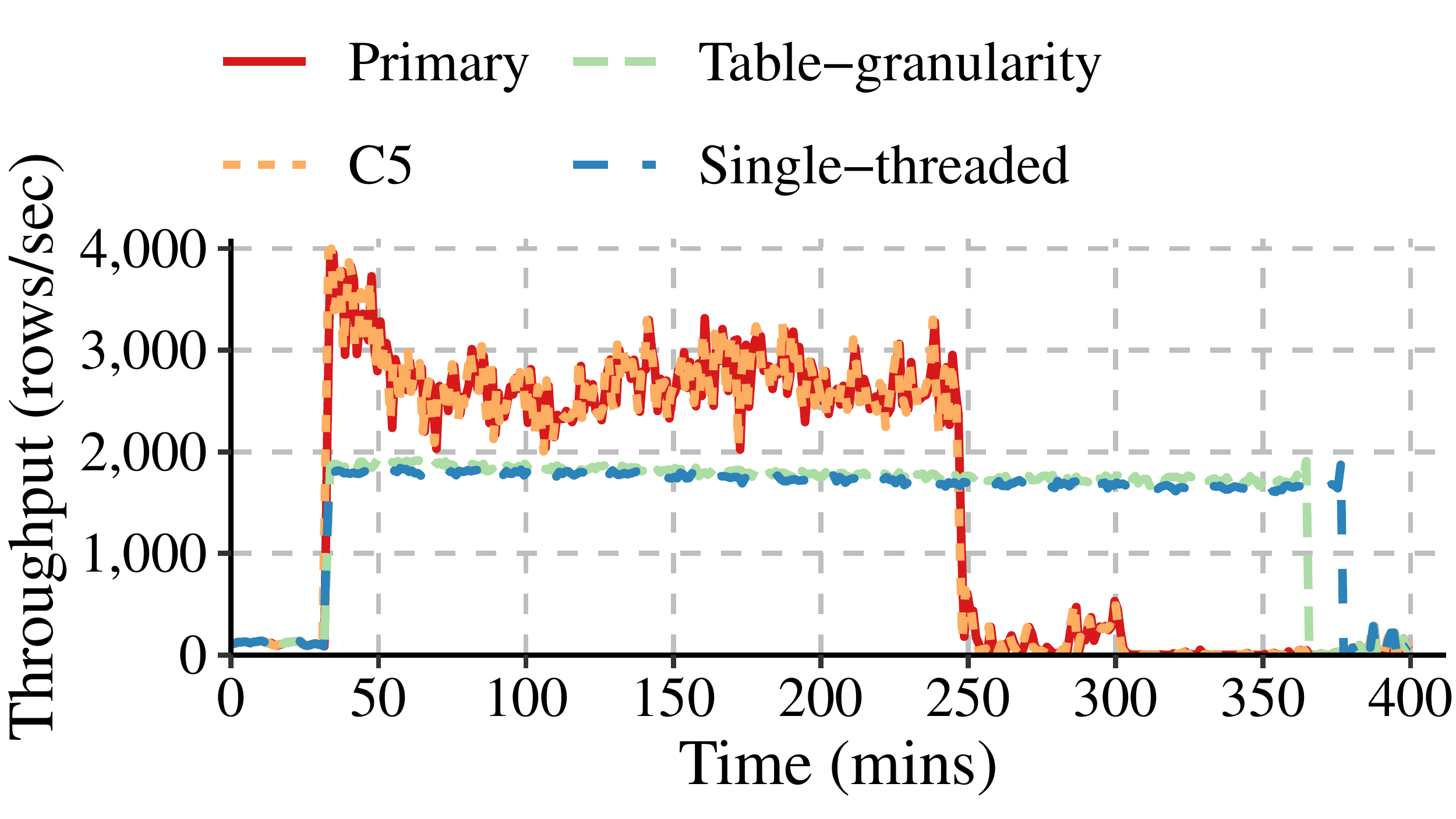}
  \caption{Replication lag at \fb{} used to reach
    nearly 2 hours every day and took 2 hours to recover. With \sysmyrocks{}, lag
    remains below 3 seconds.}
  \label{fig:fb-lag-plot}
\end{figure}

Solving replication lag also had secondary benefits: First, deploying \sysmyrocks{}
revealed bottlenecks in downstream systems,
which have since been fixed by \thefb{}'s engineers. Second, \thefb{} uses
cross-region replication. Multiple copies of the data exist on servers within a
primary region, and writes asynchronously replicate to backups in other
regions. Short lag reduces the number of times that data must be fetched from
other regions to satisfy read-your-writes consistency for clients with recent
writes. Further, if the entire primary region fails while all machines in the
backup region are lagging, some unreplicated writes may be lost. With
\sysmyrocks{}, the probability of user data loss and the magnitude of such loss if
it occurs are both significantly reduced. Finally, \sysmyrocks{} eliminated a
noisy-neighbor problem experienced by applications deployed on \thefb{}'s
internal cloud. Applications using a multi-tenant \myrocks{} instance
previously would sometimes experience replication lag caused by others sharing the instance.


\section{Related Work}




\noindentparagraph{Deterministic Concurrency Control.} Deterministic concurrency control protocols~\cite{faleiro2015multiversion,whitney1997without, faleiro2014lazy,faleiro2017visibility,thomson2012calvin,qin2021caracal} ensure that database
state is a deterministic function of the input log. \sys{}'s processing of
writes from the primary is inspired by such protocols. These
include the up-front resolution of write-write conflicts prior to executing
them~\cite{faleiro2015multiversion} and its representation of
permissible execution schedules of writes~\cite{whitney1997without, faleiro2014lazy}. Databases employing deterministic
concurrency control, however, do not designate a single replica as a primary and others as
backups. They instead employ active replication~\cite{thomson2010case,
  thomson2012calvin}. \sys{}, on the other hand, is applicable to
primary-backup systems where the primary is non-deterministic.

\noindentparagraph{Database Recovery \& Replication.} Database
recovery~\cite{mohan1992aries,lee2001differential,cha2004ptime,schwalb2014hyrise,zheng2014recovery}
and
replication~\cite{mohan1993remote,minhas2011remusdb,oracle2014eagerRep,qin2017replication,zamanian2019activeMem}
are two problems closely related to \ccc{}. In database recovery, changes to the
database are logged and stored on stable storage. If a database fails, a recovery
protocol creates a new copy of the database. Database replication extends
database recovery to reduce recovery time by replicating and applying changes to
the backup while the primary executes transactions. If the primary fails, the
backup executes a synchronization protocol to bring it into a consistent state
before processing new transactions. But database replication (and thus recovery)
is simpler than \ccc{} because backups do not serve read-only transactions, so
the backup only needs to be prefix-consistent before processing new
transactions. On the other hand a \ccc{} must always be able to serve read-only transactions from a prefix-consistent state.

\noindentparagraph{\CCC{}.} No existing asynchronous or semi-synchronous \ccc{} protocol
can guarantee bounded replication lag. Synchronous \ccc{}
protocols~\cite{elnikety2004generalized} trivially guarantee it because
the primary and backup coordinate before a transaction commits. But synchronous
protocols reduce the primary's performance. Thus, asynchronous and
semi-synchronous are more widely
deployed~\cite{existential2015sosp,instagram2015scaling,espresso2013sigmod,verbitski2018aurora,antonopoulos2019socrates}.

Transaction-~\cite{hong2013kuafu,king1991remote,oracle2001txnscheduler,mysql8writeset}
and
page-granularity~\cite{verbitski2018aurora,oracle2019adg,antonopoulos2019socrates}
protocols cannot guarantee bounded replication lag. By similar reasoning,
coarser granularity protocols, such as those using groups of
transactions~\cite{mysqlgroupcommit,mariadb10,oracle2019ggate}, cannot either.

To the best of our knowledge, Query Fresh~\cite{wang2017query} is the only existing row-granularity \ccc{} protocol. But to reduce its \worker{}s' processing while guaranteeing monotonic
prefix consistency, read-only transaction threads instantiate the backup's copy
of the database from the copied log. This lazy instantiation is serialized for
the entire read-only transaction, which may add significant latency. Further,
read-only transaction threads optimistically update the database and will abort
if multiple threads try to update the same row concurrently, further increasing
latency for higher contention workloads. Finally, Query Fresh's lazy instantiation of the backup's database can cause arbitrarily large replication lag even using single-key transactions, but we omit details here due to space constraints.

\section{Conclusion}
\label{sec:conclusion}

We presented \sys{}, the first cloned concurrency protocol to
provide bounded replication lag. \sys{} comprises three
parts: a \scheduler{}, \worker{}s, and a \snapshotter{}. \sys{} is backed by
multiple theoretical results showing the necessity of its
row-granularity protocol. We also presented two implementations: \sysmyrocks{} and \syscicada{}. The former is backward-compatible with MyRocks and deployed at \fb{}, while the latter faithfully implements our design. We demonstrated experimentally they always keep up with their primaries.





\clearpage
\bibliographystyle{ACM-Reference-Format}
\bibliography{ref}

\clearpage




\end{document}